\documentclass[5p, times]{elsarticle}

\usepackage{ecrc}

\volume{00}

\firstpage{1}

\journalname{Systems \& Control Letters}

\runauth{}
\jid{procs}
\jnltitlelogo{Systems \& Control Letters}
\CopyrightLine{2024}{Published by Elsevier Ltd.}

\usepackage{amssymb}
\usepackage{amsmath}
\usepackage{amsfonts}
\usepackage{xcolor}
\usepackage{amsthm}
\usepackage{mathtools}
\usepackage{appendix}
\usepackage{bm}
\usepackage{float}
\usepackage{hyperref}
\usepackage{tikz}
\usetikzlibrary{arrows.meta,positioning,calc,fit,backgrounds}

\newtheorem{definition}{Definition}
\newtheorem{theorem}{Theorem}

\newtheorem{corollary}{Corollary}
\newtheorem{lemma}{Lemma}

\newtheorem{assumption}{Assumption}

\usepackage[figuresright]{rotating}

\begin{document}

\begin{frontmatter}


\dochead{}

\title{Predictor-Based Output-Feedback Control of Linear Systems \\ with Time-Varying Input and Measurement Delays \\ via Neural-Approximated Prediction Horizons
\tnoteref{funding}
\tnotetext[funding]{The work of L. Bhan is supported by
the U.S. Department of Energy (DOE) grant DE-SC0024386.
The work of M. Krsti\'c was funded by AFOSR grant FA9550-
23-1-0535 and NSF grant ECCS-2151525. The work of Y. Shi
is supported by Department of Energy grant DE-SC0025495,
Schmidt Sciences AI2050 Early Career Fellowship, and NSF
grant ECCS-2442689.}}

\author[eceDept]{Luke Bhan\corref{corresponding}}
\cortext[corresponding]{Corresponding Author}
\author[maeDept]{Miroslav Krsti\'c}
\author[eceDept]{Yuanyuan Shi}

\address[eceDept]{Department of Electrical and Computer Engineering, University of California San Diego, La Jolla, CA 92093-0411, USA}
\address[maeDept]{Department of Mechanical and Aerospace Engineering, University of California San Diego, La Jolla, CA 92093-0411, USA}

\begin{abstract}
Due to simplicity and strong stability guarantees, predictor feedback methods have stood as a popular approach for time delay systems since the 1950s. For time-varying delays, however, implementation requires computing a prediction horizon defined by the inverse of the delay function, which is rarely available in closed form and must be approximated. In this work, we formulate the inverse delay mapping as an operator learning problem and study predictor feedback under approximation of the prediction horizon. We propose two approaches: (i) a numerical method based on time integration of an equivalent ODE, and (ii) a data-driven method using neural operators to learn the inverse mapping. We show that both approaches achieve arbitrary approximation accuracy over compact sets, with complementary trade-offs in computational cost and scalability. Building on these approximations, we then develop an output-feedback predictor design for systems with delays in both the input and the measurement. We prove that the resulting closed-loop system is globally exponentially stable when the prediction horizon is approximated with sufficiently small error. Lastly, numerical experiments validate the proposed methods and illustrate their trade-offs between accuracy and computational efficiency. 
\end{abstract}

\begin{keyword}
Predictor feedback, Learning-based control, Time-varying delays, Neural operators
\end{keyword}
\end{frontmatter}

\section{Introduction}\label{sec:introduction}


Time-varying delays arise naturally in cyber-physical systems due to communication networks, computation pipelines, and input latencies. As a result, delays must be accounted for when designing control laws as even small delays (on the order of milliseconds) can degrade or destabilize closed-loop performance \cite{jin2017robust}. Hence, control of delayed systems has been studied extensively for decades, beginning with early approaches such as the Smith predictor \cite{smith1957predictor} and Artstein model reduction \cite{artstein1982linear}, and continuing through the development of modern delay-system analysis tools via Lyapunov-Krasovskii/Razuminkihin methods (see~\cite{doi:10.1137/1.9780898718645, fridman2014introduction,gu2003stability,9049450, 7987762, 7765078, ZHOU2016281,9760106}).

One of the most popular compensation paradigms is the class of "predictor" feedback methods, which reconstruct the future state of the system at the point where the control input will be applied \cite{Krstic2009DelayCompensation, KarafyllisKrstic2017, LHACHEMI20207677, 9536668, BEKIARISLIBERIS2024111428}. When implemented exactly, predictor-based designs can handle arbitrarily long delays, recovering-nominal closed-loop behavior. However, predictor feedback methods face several computational challenges. For nonlinear systems, the controller generally requires solving implicit ODEs \cite{zhou2014truncated, bhan2025stabilizationnonlinearsystemsunknown}. In the presence of time-varying delays, the predictor additionally depends on the inverse delay mapping to determine the time-varying prediction horizon, which is generally not available in closed form. As a result, despite their strong theoretical guarantees, predictor-based methods for non-constant delays can almost never be implemented exactly \cite{BekiarisLiberisKrstic2013}. 

In this work, we focus on the latter challenge and develop two approaches for approximating the inverse delay mapping underlying time-varying predictor feedback. The first is a classical numerical integration scheme obtained by rewriting the inverse delay function as an ODE and then solving it numerically. The second is a data-driven approach based on neural operators \cite{deeponet, FNO:2021}, which have recently gained attention in control as learned surrogates for computationally intensive numerical procedures, with speedups of up to $1000\times$. For example, they have been used to approximate backstepping gain kernels for PDE control \cite{10374221,10872816,ZHANG2026112809}, adaptive control design \cite{LAMARQUE2025112329,10918744,BHAN2025105968},  implicit ODE mappings arising in predictor constructions \cite{pmlr-v283-bhan25a, bajraktari2026delay}, and operators appearing in traffic-flow and oil-system models \cite{LV2025112553,TOUMI2025106191}. Moreover, within the same framework, we move beyond the standard state-feedback setting and present the first predictor-feedback output-feedback design for linear systems with both time-varying input and measurement delays. The resulting controller combines observer-based state reconstruction from delayed measurements with predictor-based compensation of the input delay, while retaining the implementation challenges associated with practical predictor computation.

Therefore, the main contributions of this work are summarized as:
\begin{itemize}
\item \emph{We establish an operator-theoretic framework for approximating the inverse delay mapping.} In particular, we reformulate inverse delay recovery as an operator approximation problem and prove that the inverse delay mapping is Lipschitz continuous on compact sets. This structural result provides the regularity foundation needed to analyze approximation schemes in a unified and rigorous manner.

\item \emph{Building on this framework, we derive two approximation theorems that provide rigorous guarantees for two distinct constructions of the inverse delay mapping over compact sets.} In particular, the first theorem establishes approximation guarantees for a numerical integration approach based on an equivalent ODE formulation of the inverse mapping, while the second theorem characterizes approximation by a neural operator that learns the inverse mapping over compact families of delay functions. Together, these results show that both the model-based and learning-based approaches fit naturally within the operator-theoretic framework and admit explicit approximation guarantees on compact sets.

\item \emph{We develop a predictor-based control framework, formulated through the transport PDE representation, that is robust to inverse-delay approximation error and preserves closed-loop stability under suitable conditions.} More specifically, we incorporate approximate inverse delay mappings into the predictor design and use the transport PDE formulation to track how approximation errors propagate through the predictor dynamics. This leads to sufficient conditions under which robust exponential closed-loop stability is maintained despite imperfect inversion of the delay mapping.
\end{itemize}

\color{blue}
\color{black}

\textbf{Notation:}
The Euclidean norm is denoted by $|\cdot|$, and the induced matrix norm by $\|\cdot\|$. For a symmetric matrix $M>0$, $\lambda_{\min}(M)$ and $\lambda_{\max}(M)$ denote its minimum and maximum eigenvalues. For a function $f$, $\|f\|_{L^\infty(a,b)} := \sup_{\tau\in[a,b]}|f(\tau)|$ and $\|f\|_{L^2(a,b)} := \left(\int_a^b |f(\tau)|^2 d\tau\right)^{1/2}$. For distributed states 

\noindent $u(x,t)$, $\|u(t)\|_2^2 := \int_0^1 |u(x,t)|^2 dx$. 
We use $\mathbb{R}_+ = (0,\infty)$. The space $C^k(\Omega;\mathbb{R}^q)$ denotes $k$-times continuously differentiable functions. Lastly, time derivatives are denoted by $\dot{f}(t)$ and for distributed states, $u_x(x, t)$ and $u_t(x, t)$ denote the partial space and time derivatives respectively. 

\color{blue}
A preliminary version of this work was submitted to the 2026 IEEE Conference on Decision and Control \cite{bhan2026neuralCDC}. This manuscript is substantially different and significantly extends that earlier version. Specifically, the preliminary work addressed only the state-feedback case and did not include the new control law developed in Section \ref{sec:feedback-scheme}. Moreover, the treatment of the output-feedback case in this manuscript requires a considerably more sophisticated stability analysis in Section \ref{sec:stability} together with expanded and improved numerical results in Section \ref{sec:numerics}.
\color{black}

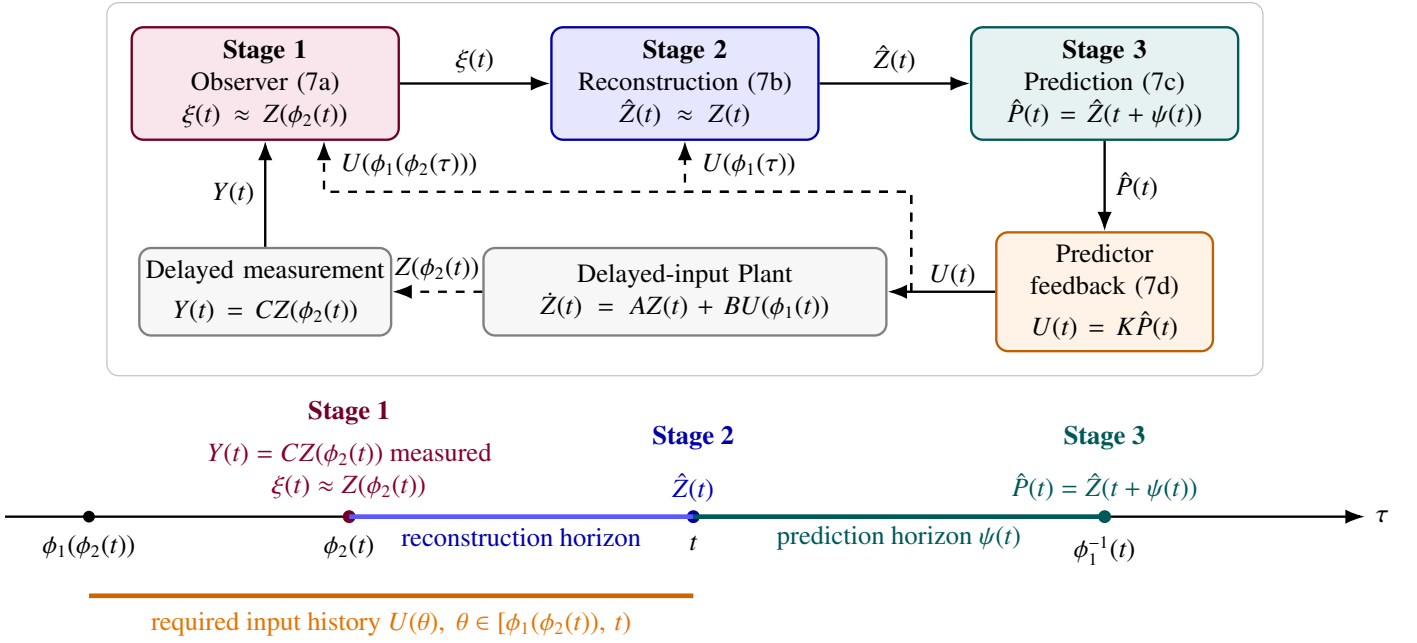
\begin{figure*}[!t]
\centering
\resizebox{\textwidth}{!}{%
\begin{tikzpicture}[
    >=Latex,
    font=\small,
    block1/.style={
        draw,
        rounded corners,
        thick,
        align=center,
        fill=purple!10,
        draw=purple!60!black,
        minimum height=1.35cm,
        text width=2.9cm,
        inner sep=4pt
    },
    block2/.style={
        draw,
        rounded corners,
        thick,
        align=center,
        fill=blue!10,
        draw=blue!65!black,
        minimum height=1.35cm,
        text width=2.9cm,
        inner sep=4pt
    },
    block3/.style={
        draw,
        rounded corners,
        thick,
        align=center,
        fill=teal!10,
        draw=teal!70!black,
        minimum height=1.35cm,
        text width=2.9cm,
        inner sep=4pt
    },
    smallblockA/.style={
        draw,
        rounded corners,
        thick,
        align=center,
        fill=gray!6,
        draw=black!50,
        minimum height=1.05cm,
        text width=3cm,
        inner sep=0pt
    },
    smallblockB/.style={
        draw,
        rounded corners,
        thick,
        align=center,
        fill=orange!10,
        draw=orange!75!black,
        minimum height=1.05cm,
        text width=2.3cm,
        inner sep=4pt
    },
    arrow/.style={-Latex, thick},
    darrow/.style={-Latex, thick, dashed}
]

\coordinate (L) at (0,-1.8);
\coordinate (R) at (16.2,-1.8);
\coordinate (C) at ($(L)!0.5!(R)$);

\def\yTop{2.0}
\def\yBottom{-0.5}
\def\yTimeline{-5}

\node[smallblockA] (meas) at ($(C)+(-5,\yBottom)$)
{Delayed measurement\\[1mm]
$Y(t)=CZ(\phi_2(t))$};

\node[block1] (obs) at ($(C)+(-5,\yTop)$)
{\textbf{Stage 1}\\
Observer \eqref{eq:control-law-1}\\
$\xi(t)\approx Z(\phi_2(t))$};

\node[block2] (recon) at ($(C)+(0,\yTop)$)
{\textbf{Stage 2}\\
Reconstruction \eqref{eq:control-law-2}\\
$\hat Z(t)\approx Z(t)$};

\node[block3] (pred) at ($(C)+(5,\yTop)$)
{\textbf{Stage 3}\\
Prediction \eqref{eq:control-law-3}\\
$\hat P(t)=\hat Z(t+\psi(t))$};

\node[smallblockB] (fb) at ($(C)+(5,\yBottom)$)
{Predictor feedback \eqref{eq:control-law-4}\\[1mm]
$U(t)=K\hat P(t)$};

\node[smallblockA, text width=4.8cm] (plant) at ($(C)+(0,\yBottom)$)
{Delayed-input Plant\\
$\dot{Z}(t)=AZ(t)+BU(\phi_1(t))$};

\draw[arrow] (meas) -- node[pos=0.5,left] {$Y(t)$} (obs);
\draw[arrow] (obs) -- node[above] {$\xi(t)$} (recon);
\draw[arrow] (recon) -- node[above] {$\hat Z(t)$} (pred);
\draw[arrow] (pred) -- node[right] {$\hat P(t)$} (fb);

\draw[arrow] (fb.west) -- ++(-1.0,0) coordinate (uTap)
    |- node[pos=0.25,above,xshift=14pt,yshift=-2pt] {$U(t)$} (plant.east);

\draw[darrow] (plant.west) -- node[above] {$Z(\phi_2(t))$} (meas.east);

\draw[darrow]
(uTap) -- ++(0,1.2)
-| node[pos=0.85,right,xshift=2pt,yshift=-3pt] {$U(\phi_1(\phi_2(\tau)))$}
($(obs.south)+(0.7,0)$);

\draw[darrow]
(uTap) -- ++(0,1.2)
-| node[pos=0.85,right,xshift=2pt,yshift=-3pt] {$U(\phi_1(\tau))$}
(recon.south);

\draw[thick,-Latex] (0,\yTimeline) -- (16.2,\yTimeline) node[right] {$\tau$};

\coordinate (a) at (1.0,\yTimeline);
\coordinate (b) at (4.1,\yTimeline);
\coordinate (c) at (8.2,\yTimeline);
\coordinate (d) at (13.1,\yTimeline);

\fill[black!80!black] (a) circle (2pt);
\fill[purple!60!black] (b) circle (2.2pt);
\fill[blue!65!black] (c) circle (2.2pt);
\fill[teal!70!black] (d) circle (2.2pt);

\node[text=purple!70!black, align=center] at ($(b)+(0,0.55)$)
{$Y(t)=CZ(\phi_2(t))$ measured\\ $\xi(t)\approx Z(\phi_2(t))$};

\node[text=blue!70!black] at ($(c)+(0,0.35)$)
{$\hat Z(t)$};

\node[text=teal!70!black, align=center] at ($(d)+(0,0.35)$)
{$\hat P(t)=\hat Z(t+\psi(t))$};

\node[text=purple!60!black] at ($(b)+(0,1.25)$) {\textbf{Stage 1}};
\node[text=blue!65!black] at ($(c)+(0,0.95)$) {\textbf{Stage 2}};
\node[text=teal!70!black] at ($(d)+(0,0.95)$) {\textbf{Stage 3}};

\draw[ultra thick, blue!65] (b) -- (c);
\draw[ultra thick, teal!70!black] (c) -- (d);

\node[text=blue!70!black] at ($(b)!0.5!(c)+(0,-0.25)$)
{reconstruction horizon};

\node[text=teal!70!black] at ($(c)!0.5!(d)+(0,-0.25)$)
{prediction horizon $\psi(t)$};

\node[below=3pt] at (a) {$\phi_1(\phi_2(t))$};
\node[below=3pt] at (b) {$\phi_2(t)$};
\node[below=3pt] at (c) {$t$};
\node[below=3pt] at (d) {$\phi_1^{-1}(t)$};

\draw[ultra thick, orange!80!black] ($(a)+(0,-0.95)$) -- ($(c)+(0,-0.95)$);
\node[below=2pt, text=orange!90!black] at ($(a)!0.5!(c)+(0,-0.95)$)
{required input history $U(\theta),\ \theta\in[\phi_1(\phi_2(t)),\,t)$};

\begin{scope}[on background layer]
\node[
    draw=black!25,
    rounded corners,
    inner sep=8pt,
    fit=(meas)(obs)(recon)(pred)(fb)(plant)
] {};
\end{scope}

\end{tikzpicture}%
}
\caption{Three-stage observer--predictor feedback law and corresponding horizons.}
\label{fig:observer_predictor_feedback}
\end{figure*}
\section{Output-feedback Control Design}\label{sec:feedback-scheme}
In this work, we study:
\begin{subequations}
\begin{align}
    \dot{Z}(t) =\;& AZ(t) + BU(t-D_1(t))\,, \label{eq:dynamics-1} \quad t \geq 0\\
    Y(t) =\;& CZ(t-D_2(t))\,, \label{eq:dynamics-2} \quad t \geq 0
\end{align}
\end{subequations}
where $Z \in \mathbb{R}^n$ is the state, $U\in \mathbb{R}^m$ is the control input, $D_1, D_2: \mathbb{R} \to \mathbb{R}_+$ are delay functions, and $Y(t)$ is the system measurement available to the engineer. We begin by establishing the two preliminary function definitions:
\begin{equation}
\phi_1(t) = t-D_1(t)\,, \quad \phi_2(t) = t-D_2(t)\,. \label{eq:delay-func-defs}
\end{equation}
$\phi_1$ represents the delay-time, that is the current delay time stamp at $t$ in the input. $\phi_2$ gives the delayed-state measurement at time $t$ - that is how far in the past we are measuring the state.

To ensure well-posedness of the system, we make the following assumptions regarding the delay functions in \eqref{eq:delay-func-defs}

\begin{assumption} \label{assumption:delay-time-positive-and-uniformly-bounded}
Let $\phi_i:\mathbb{R}_+ \to \mathbb{R}$, $i\in\{1,2\}$, be continuously differentiable functions. Assume there exists a constant $\pi_0^\ast>0$ such that
\begin{align}
    t-\phi_i(t)\ge \pi_0^\ast>0\,, \quad \forall t>0,\; i\in\{1,2\}.
\end{align}
Moreover, define
\begin{align}
    \pi_1^\ast
    :=
    \frac{1}{
    \displaystyle
    \sup_{i\in\{1,2\}}
    \sup_{\nu\ge \phi_i^{-1}(0)}
    (\nu-\phi_i(\nu))
    }>0.
\end{align}
\end{assumption}

\begin{assumption} \label{assumption:invertibility}
Let $\phi_i:\mathbb{R}_+ \to \mathbb{R}$, $i\in\{1,2\}$, be continuously differentiable functions. Assume there exists a constant $\pi_2^\ast>0$ such that
\begin{align}
    \phi_i'(t)\ge \pi_2^\ast>0\,, \quad \forall t>0,\; i\in\{1,2\}.
\end{align}
Moreover, define
\begin{align}
    \pi_3^\ast
    :=
    \frac{1}{
    \displaystyle
    \sup_{i\in\{1,2\}}
    \sup_{\nu\ge \phi_i^{-1}(0)}
    \phi_i'(\nu)
    }>0.
\end{align}
\end{assumption}

Note that both assumptions are required for the \emph{exact} implementation of predictor feedback under time-varying delays \cite{5373886, 10.1115/1.4005278}. Specifically, they ensure that $(a)$ the delay remains strictly positive and $(b)$ its temporal variation is sufficiently slow, in the sense that $\dot{D}(t)<1$. Thus, the analysis pertains to systems with \emph{slowly varying delay mechanisms}, including transport-dominated processes, network-induced delays with mild time variation \cite{LIU201957}, and sensing or actuation pipelines whose latency changes continuously with the operating condition.

We propose the following control design in three segments:
\begin{enumerate}
    \item Construct an observer for the delayed state $Z(\phi_2(t))$ using the measurement $Y(t)$, yielding the estimate $\xi(t)$:
\begin{subequations}
\begin{align}
    \dot{\xi}(t) =&\; \phi'_2(t)\Bigl[A\xi(t) + BU(\phi_1(\phi_2(t))) + L\bigl(Y(t) - C\xi(t)\bigr)\Bigr], \label{eq:control-law-1} \\
\intertext{    \item Propagate this estimate forward in time using the plant dynamics to obtain $\hat{Z}(t)$, an estimate of the current state:
}
    \hat{Z}(t) =&\; e^{A(t-\phi_2(t))}\xi(t) + \int_{\phi_2(t)}^t e^{A(t-\tau)}B\,U(\phi_1(\tau))\, d\tau, \label{eq:control-law-2} \\ 
    \intertext{    \item Propagate the estimate further to the predicted future state $\hat{P}(t) = \hat{Z}(t+ \psi(t))$, where $\psi(t) = \phi^{-1}_1(t)-t$ is the prediction horizon. We then use this predicted state as input to the nominal control law:
}
\hat{P}(t)
=&\; e^{A\psi(t)}\,\hat{Z}(t) \nonumber \\
&+ \int_{t}^{t+\psi(t)}
e^{A\big(t+\psi(t)-s\big)}
\,B\,U\big(\phi_1(s)\big)\,ds,
\label{eq:control-law-3}\\
U(t) =&\; K\hat{P}(t).
\label{eq:control-law-4}
\end{align}
\end{subequations}
\end{enumerate}

Notice that the control law requires an initialization of both the control and the state measurements as 
\begin{subequations}
    \begin{align}
    U(\theta)\,, &\quad \theta \in[\phi_1(\phi_2(0)), 0]\,, \\
    Z(\theta)\,, &\quad \theta \in [\phi_2(0), 0]\,, 
\end{align}
\end{subequations}
with the observer initial condition at $t=0$ as $\xi(0) = \xi_0 \in \mathbb{R}^n$. A full diagram of the stages of the control law, as well as the various horizons in this work is given in Figure \ref{fig:observer_predictor_feedback}. Furthermore, we choose $K$ and $L$ such that both $A+BK$ and $A-LC$ are Hurwitz as is standard in linear systems. In this work, we will show that the exact implementation of this scheme satisfies the following theorem:

\begin{theorem} \label{thm:exact-theorem}
    Consider the plant \eqref{eq:dynamics-1}, \eqref{eq:dynamics-2} such that Assumptions \ref{assumption:delay-time-positive-and-uniformly-bounded} and \ref{assumption:invertibility} are satisfied. Then, there exist constants $M, C> 0$ such that all solutions of the plant with the \textbf{exact} feedback control law \eqref{eq:control-law-1}-\eqref{eq:control-law-4} satisfy for all $t \geq 0$
    \begin{align}
        \Gamma(t) \leq&\; Me^{-Ct} \Gamma(0)\,, \\
    \Gamma(t) :=&\; |Z(t)|^2 + |Z(t)-\hat{Z}(t)|^2 + \sup_{\phi_1(t)\le \tau \le t} |U(\tau)|^2.
    \end{align}
\end{theorem}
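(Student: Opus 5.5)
The plan is to reduce the closed loop to a cascade of three exponentially stable pieces: an observer error, a reconstruction error, and a nominal predictor loop. I would analyse them with transport-PDE backstepping and one Lyapunov functional. With exact horizons, write the observer error $\tilde\xi(t)=Z(\phi_2(t))-\xi(t)$. By the chain rule, $\frac{d}{dt}Z(\phi_2(t))=\phi_2'(t)[AZ(\phi_2(t))+BU(\phi_1(\phi_2(t)))]$, and subtracting \eqref{eq:control-law-1} gives
\[
\dot{\tilde\xi}(t)=\phi_2'(t)(A-LC)\tilde\xi(t).
\]
With $V_1=\tilde\xi^\top P_L\tilde\xi$, where $(A-LC)^\top P_L+P_L(A-LC)=-Q_L$, we get $\dot V_1\le-\pi_2^\ast\frac{\lambda_{\min}(Q_L)}{\lambda_{\max}(P_L)}V_1$ by Assumption \ref{assumption:invertibility}. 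Next, variation of constants for \eqref{eq:dynamics-1} on $[\phi_2(t),t]$ shows
\[
Z(t)-\hat Z(t)=e^{A(t-\phi_2(t))}\tilde\xi(t).
\]
Since $t-\phi_2(t)\le 1/\pi_1^\ast$ by Assumption \ref{assumption:delay-time-positive-and-uniformly-bounded}, this gives $|Z(t)-\hat Z(t)|\le e^{\|A\|/\pi_1^\ast}|\tilde\xi(t)|$. So the reconstruction error is an exponentially decaying output of a static gain times $\tilde\xi$.

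Similarly, with exact $\psi$, \eqref{eq:control-law-3} equals $P(t)-e^{A\psi(t)}(Z(t)-\hat Z(t))$, where $P(t)$ is the true predictor $Z(\phi_1^{-1}(t))$. Hence
\[
U(t)=KP(t)-Ke^{A\psi(t)}e^{A(t-\phi_2(t))}\tilde\xi(t).
\]
I would then use the standard transport representation $u(x,t)=U(\phi_1(t+x(\phi_1^{-1}(t)-t)))$, $x\in[0,1]$, with state $Z$, and the backstepping transformation
\[
w(x,t)=u(x,t)-K\hat{p}(x,t),
\]
where $\hat p$ is the predictor evaluated along the characteristic. Following \cite{5373886, 10.1115/1.4005278}, the target system becomes
\[
\dot Z=(A+BK)Z+Bw(0,t),\qquad \pi(x,t)w_t=w_x,
\]
with $w(1,t)=-Ke^{A(\psi(t)+t-\phi_2(t))}\tilde\xi(t)$. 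Here $\pi(x,t)$ is bounded above and below through $\pi_0^\ast,\pi_1^\ast,\pi_2^\ast,\pi_3^\ast$. The boundary term is a vanishing disturbance bounded by $c|\tilde\xi(t)|$, because $\psi(t)\le \sup(\nu-\phi_1(\nu))$ is bounded.

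I would then take
\[
V=Z^\top PZ+a\int_0^1 e^{bx}\pi(x,t)|w(x,t)|^2dx+dV_1.
\]
Differentiating, the transport term contributes $a[e^b|w(1,t)|^2-|w(0,t)|^2-b\int e^{bx}\pi|w|^2]$, plus a term from $\pi_t$. Choosing $a$ large absorbs the cross term $2Z^\top PBw(0,t)$, choosing $b$ large makes the $\pi_t$ term harmless, and choosing $d$ large absorbs $ae^b c^2|\tilde\xi|^2$. This gives $\dot V\le -\mu V$.

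The main obstacle is the bound on the $\pi_t$ term. It requires boundedness of $\phi_1''$, or an equivalent argument along characteristics, which is not explicit in the assumptions. I would first try to handle it as in \cite{5373886}, using $\pi(x,t)$ affine in $x$ with coefficient $(\phi_1^{-1})'$-related quantities. If that fails, I would replace the $L^2$ functional with a sup-norm estimate on $w$ along characteristics, which needs only the bounds in the two assumptions.

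Finally, I would convert $V$ back to $\Gamma$. The direct and inverse backstepping transformations are bounded in sup norm. They bound $\sup_{\phi_1(t)\le\tau\le t}|U(\tau)|$ in terms of $|Z|$, $\|w\|_\infty$ and $|\tilde\xi|$, and conversely, using the exponential bounds $e^{\|A\|\psi}$ and the fact that $|\tilde\xi|$ and $|Z-\hat Z|$ are equivalent. Since the functional is in $L^2$ but $\Gamma$ uses a sup, I would run the same computation with $\int e^{bx}\pi|w|^{2q}dx$ and let $q\to\infty$, as in the sup-norm predictor literature. This gives a sup-norm decay estimate for $w$, which yields $\Gamma(t)\le Me^{-Ct}\Gamma(0)$.
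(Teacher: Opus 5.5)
Your architecture matches the paper's proof of Theorem~\ref{thm:main-theorem} specialised to $\epsilon=0$. It uses the same observer-error equation $\dot{\tilde\xi}=\phi_2'(t)(A-LC)\tilde\xi$ and the same identity $Z-\hat Z=e^{A(t-\phi_2(t))}\tilde\xi$. It has the same backstepping target system with $w(1,t)=Ke^{A\psi(t)}(\hat Z(t)-Z(t))$ and the same three-term functional.

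\textbf{The gap: the weighted transport functional.} You weight the transport part by $\pi(x,t)$; in your convention $\pi w_t=w_x$ this is $\pi=\psi/(1+x\dot\psi)$. This produces a term with $\partial_t\pi$, which contains $\ddot\psi(t)=-\phi_1''(\phi_1^{-1}(t))/\phi_1'(\phi_1^{-1}(t))^3$. Under Assumptions~\ref{assumption:delay-time-positive-and-uniformly-bounded}--\ref{assumption:invertibility} ($\phi_1\in C^1$ only) this quantity need not be bounded, or even exist. No choice of $b$ can absorb a sign-indefinite, unbounded coefficient, so your claim that large $b$ makes this term harmless fails, and the step is left open.

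\textbf{The fix: drop the weight.} Write $w_t=\pi w_x$ with the paper's $\pi=(1+x\dot\psi)/\psi$ and take $L=\tfrac12\int_0^1e^{bx}w^2\,dx$. Integration by parts gives
\[
\dot L=\tfrac{e^b}{2}\pi(1,t)w^2(1,t)-\tfrac12\pi(0,t)w^2(0,t)-\tfrac12\int_0^1\bigl(b\pi+\pi_x\bigr)e^{bx}w^2\,dx ,
\]
in which only $\pi_x=\dot\psi/\psi$ appears. We have $1+\dot\psi(t)=1/\phi_1'(\phi_1^{-1}(t))\in[\pi_3^\ast,1/\pi_2^\ast]$ and $\psi(t)\in[\pi_0^\ast,1/\pi_1^\ast]$. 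The numerator $b(1+x\dot\psi)+\dot\psi$ is affine in $x$, so checking the endpoints gives $b\pi+\pi_x\ge\pi_1^\ast\min\{b-1+\pi_3^\ast,(b+1)\pi_3^\ast-1\}>0$ for $b$ large. In addition, $\pi(0,t)\ge\pi_1^\ast$ and $\pi(1,t)\le 1/(\pi_0^\ast\pi_2^\ast)$. With this, your choices of $a$ and then $d$ go through exactly as in the paper.

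\textbf{The sup-norm in $\Gamma$.} Your concern here is well founded, and on this point you are more careful than the paper. The paper simply asserts $\Gamma/M\le V\le M\Gamma$, but an $L^2$ functional cannot dominate $\sup_{\phi_1(t)\le\tau\le t}|U(\tau)|^2$. Take $Z=0$, $\tilde\xi=0$ and a narrow input spike: $V$ is arbitrarily small while $\Gamma\ge1$.

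You do not need the $L^{2q}$, $q\to\infty$ device, though. Your characteristic fallback closes this, and by itself gives a complete proof, because the closed loop is the triangular cascade $\tilde\xi\to w\to Z$. Indeed $w(x,t)=U(\phi_1(s))-KZ(s)$ with $s=t+x\psi(t)$. Hence $w(x,t)=w(1,\phi_1(s))$ when $\phi_1(s)\ge0$, and $w(x,t)=w(s/\psi(0),0)$ otherwise. So $\sup_x|w(x,t)|$ is at most the larger of two quantities:
\begin{itemize}
\item $\sup_x|w(x,0)|$, which is relevant only for $t<\phi_1^{-1}(0)$;
\item $\sup\{|w(1,\tau)|:\tau\in[\phi_1(t),t]\cap[0,\infty)\}\le c\sup|\tilde\xi(\tau)|$, which decays exponentially.
\end{itemize}
The $Z$-equation driven by $w(0,t)$ and the relation $U(\tau)=w(1,\tau)+KZ(\phi_1^{-1}(\tau))$ then give the estimate on $\Gamma$ using only first derivatives of $\phi_1$.
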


Theorem \ref{thm:exact-theorem} claims global exponential stability of the full feedback control law and we will see that Theorem \ref{thm:exact-theorem} is a special case of our main result. However, there is one major issue with the theorem in that it assumes \eqref{eq:control-law-1}-\eqref{eq:control-law-4} are implemented exactly. For most delay functions, the control law \eqref{eq:control-law-1}-\eqref{eq:control-law-4} cannot be computed exactly as the predictor $\hat{P}(t)$ requires the inverse of the delay time $\psi(t) = \phi_1^{-1}(t)-t$. Hence, this motivates developing the following two approximation schemes for $\psi$ in Section \ref{sec:approximation-schemes}

\section{Approximation schemes for the delay horizon time} \label{sec:approximation-schemes}
As we saw in Section \ref{sec:feedback-scheme}, the analytical inverse of the delay-time function is rarely computable. Consequently, in this section, we will propose two approaches for its approximation. We begin by defining an operator formulation of the mappings:
\begin{definition}\label{definition:delay-time}
    (\textbf{Delay time operator}) Consider any function $D(t) \in C^1(\mathbb{R}_+; \mathbb{R}_+)$. Then, define the delay time operator $\Phi: C^1(\mathbb{R}_+, \mathbb{R}_+) \to C^1(\mathbb{R}_+, \mathbb{R}_+)$ as the affine mapping satisfying
    \begin{align}
        \Phi(D)(t) := t-D(t) \,, \quad t \in \mathbb{R}_+.
    \end{align}
\end{definition}

\begin{definition} \label{definition:inverse-delay-time}
    (\textbf{Prediction horizon time operator})  Consider any function $D(t) \in C^1(\mathbb{R}_+; \mathbb{R}_+)$ and corresponding  $\phi(t) = t- D(t)$ satisfying  Assumptions \ref{assumption:delay-time-positive-and-uniformly-bounded} and \ref{assumption:invertibility}. The corresponding inverse function is implicitly given such that $\phi^{-1}(\phi(t)) = t$. Then, define the inverse delay time operator $\Psi: C^1(\mathbb{R}_+, \mathbb{R}_+) \to C^1(\mathbb{R}_+, \mathbb{R}_+)$ as
    \begin{align}
        \Psi(D)(t):= \phi^{-1}(t)-t\,,
    \end{align}
    where $\phi^{-1}(\phi(t)) = t$.
\end{definition}

For convenience, we denote the operator output by
\begin{align}
    \psi(t) = \Psi(D)(t).
\end{align}
The first approach is a numerical time-stepping scheme obtained by approximating the ODE characterizing $\Psi$ using Euler's method. We begin by establishing a key property of the operator $\Psi$, namely that it is Lipschitz.

\begin{lemma}[Lipschitz dependence of $\Psi$ on $D$] \label{lem:inverse_lipschitz_in_D}
Let $D_1,D_2 \in C^1(\mathbb{R}_{> 0};\mathbb{R}_{> 0})$ and define
$\phi_i(t) := t - D_i(t), \quad i\in\{1,2\}.$
Assume that each $\phi_i$ satisfies Assumptions \ref{assumption:delay-time-positive-and-uniformly-bounded} and \ref{assumption:invertibility}. 
Then, for every $y$ in the common range $\mathcal{R}\coloneq \phi_1(\mathbb{R}_{> 0})\cap \phi_2(\mathbb{R}_{> 0})$,
\begin{align}
\bigl|\Psi(D_1)(y)-\Psi(D_2)(y)\bigr|
\;\leq\;
\frac{1}{\pi_2^\ast}\,\|D_1-D_2\|_{L^\infty(\mathbb{R}_{> 0})}.
\end{align}
where $\Psi$ is given as in Definition \ref{definition:inverse-delay-time}. 
\end{lemma}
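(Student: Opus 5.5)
Fix $y\in\mathcal{R}$ and set $t_i:=\phi_i^{-1}(y)$, $i\in\{1,2\}$. These points exist and are unique: by Assumption \ref{assumption:invertibility}, $\phi_i'\ge\pi_2^\ast>0$, so each $\phi_i$ is strictly increasing and hence a bijection onto its range. Since $\Psi(D_i)(y)=t_i-y$, we have
\begin{align}
\Psi(D_1)(y)-\Psi(D_2)(y)=t_1-t_2,
\end{align}
so the claim reduces to bounding $|t_1-t_2|$ by $\|D_1-D_2\|_{L^\infty}/\pi_2^\ast$.

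The plan is to convert the difference of arguments into a difference of function values, using the uniform slope lower bound of one map (say $\phi_1$). First, since $\phi_1(t_1)=y=\phi_2(t_2)$, we can write
\begin{align}
\phi_1(t_1)-\phi_1(t_2)=\phi_2(t_2)-\phi_1(t_2)=D_1(t_2)-D_2(t_2),
\end{align}
which uses the definition $\phi_i(t)=t-D_i(t)$ evaluated at the single point $t_2$. Second, by the mean value theorem there is $\xi$ between $t_1$ and $t_2$ (so $\xi\in\mathbb{R}_{>0}$, where Assumption \ref{assumption:invertibility} applies) with $\phi_1(t_1)-\phi_1(t_2)=\phi_1'(\xi)(t_1-t_2)$. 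Combining this with $\phi_1'(\xi)\ge\pi_2^\ast$ gives
\begin{align}
\pi_2^\ast|t_1-t_2|\le |D_1(t_2)-D_2(t_2)|\le \|D_1-D_2\|_{L^\infty(\mathbb{R}_{>0})}.
\end{align}
Dividing by $\pi_2^\ast$ yields the stated bound.

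The only delicate point is bookkeeping about domains. The argument needs $t_1,t_2$ and the interval between them to lie in $\mathbb{R}_{>0}$, where both the slope bound and the definition of $D_i$ hold. This follows because $y\in\mathcal{R}$ is, by definition, attained by each $\phi_i$ on $\mathbb{R}_{>0}$, and the interval between two positive numbers is itself positive. The constant $\pi_2^\ast$ is understood to be common to both maps, as in Assumption \ref{assumption:invertibility}. Assumption \ref{assumption:delay-time-positive-and-uniformly-bounded} is not needed beyond guaranteeing that $\Psi$ is well defined.
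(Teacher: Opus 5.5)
Your proof is correct, and it differs from the paper's in one small but consequential way. The paper writes the fixed-point identities $\psi_i(y)=D_i(y+\psi_i(y))$, subtracts them, and inserts $D_2(y+\psi_1(y))$. It then applies the triangle inequality and the mean value theorem to $D_2$, claiming $|D_2(y+\psi_1(y))-D_2(y+\psi_2(y))|\le(1-\pi_2^\ast)|\psi_1(y)-\psi_2(y)|$, and absorbs this term on the left. That step needs the two-sided bound $|D_2'|\le 1-\pi_2^\ast$. Assumption \ref{assumption:invertibility} only gives the one-sided bound $D_2'\le 1-\pi_2^\ast$. A delay that decreases quickly somewhere, so that $\phi_2'>2-\pi_2^\ast$ there, is allowed by the assumptions but invalidates the paper's inequality as written. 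You instead keep the sign: you reduce the difference to the single identity $\phi_1(t_1)-\phi_1(t_2)=D_1(t_2)-D_2(t_2)$ and apply the mean value theorem to $\phi_1$ itself. So you use only the lower slope bound $\phi_1'\ge\pi_2^\ast$, which is exactly what the assumption provides. The paper's decomposition is repaired the same way: without absolute values it reads $\phi_2(t_1)-\phi_2(t_2)=D_1(t_1)-D_2(t_1)$, which is your argument with the indices swapped. Your version also shows that the slope bound is needed for only one of the two maps; the other needs only injectivity, so that $\Psi$ is well defined.
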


The proof is given in Appendix \ref{lem:inverse_lipschitz_in_D}. 

\subsection{Euler-approximation of the inverse delay time}
The Euler method for approximating inverses has been well studied, but never applied in the context of predictor feedback problems \cite{atkinson2009numerical}, \cite{KarafyllisKrstic2017}. Hence, we provide a terse explanation of the approach and corresponding error theorem to highlight the challenges. 
From the identity $\phi(\psi(t)+t) = t$, the derivative of $\psi(t)$ satisfies the ODE
\begin{align} \label{eq:psi-ode}
    \frac{d \psi(t)}{dt} = \frac{\dot{D}(t+\psi(t))}{1-\dot{D}(t+\psi(t))} \,, 
\end{align}
where the initial condition satisfies the fixed-point equation
$\psi(0) - D(\psi(0)) = 0$.
This equation is well-posed under Assumption \ref{assumption:invertibility}. 
The ratio appearing in \eqref{eq:psi-ode} reflects the rate of change of the delay through $\dot{D}$. 
When $\dot{D}(t)$ approaches $1$, the prediction horizon varies rapidly, whereas when the delay varies slowly, the horizon length remains nearly constant. 
To compute $\psi(t)$, one first solves the fixed-point equation for $\psi(0)$ and then numerically integrates the ODE forward in time. 
Under the classical Euler approximation of the ODE, one obtains the following well-known theorem:
\begin{theorem}[Explicit Euler approximation of $\psi$ \cite{atkinson2009numerical}]
\label{thm:euler_phi_inverse_explicit}
Suppose Assumptions~\ref{assumption:delay-time-positive-and-uniformly-bounded} and \ref{assumption:invertibility} hold and, in addition, $\phi \in C^2(\mathbb{R}_{>0})$. Then consider the differential equation satisfying
\begin{subequations}
\begin{align}
    \dot{\psi}(t)
    &= \frac{\dot D(t+\psi(t))}{1-\dot D(t+\psi(t))}, \\
    \psi(0) &= t_0, \label{eq:euler-init-cond-1} \\
    t_0-D(t_0) &= 0.  \label{eq:euler-init-cond-2}
\end{align}
\end{subequations}
Let $h=T/N$ and define $t_n=nh$ for $n=0,\dots,N$. Define the explicit Euler approximation $\hat{\psi}(t_n)$ by
\begin{align}
    \hat{\psi}(t_{n+1})=\hat{\psi}(t_n)
    + h\frac{\dot D(t_n+\hat \psi(t_n))}{1-\dot D(t_n+\hat \psi(t_n))}
\end{align}
Then, on any compact interval $[0, T]$, with $T> 0$, the explicit Euler approximation satisfies
\begin{align}
    \max_{0\leq t_n \leq T}
    \left|
    \psi(t_n)-\hat{\psi}(t_n)
    \right|
    \leq
    \left[
    \frac{e^{KT}-1}{K}
    \right]
    \frac{1}{2}h
    \max_{0\leq t\leq T}
    \left|
    \ddot{\psi}(t)
    \right|,
\end{align}
where the Lipschitz constant $K$ may be chosen as
\begin{align}
    K
    &=
    \sup_{0\le t\le T}
    \left|
        \frac{\phi''(t+\psi(t))}{(\phi'(t+\psi(t)))^2}
    \right|
   \\& \le
    \frac{1}{(\pi_3^\ast)^2}
    \sup_{0\le t\le T}
    \left|
        \phi''(t+\psi(t))
    \right|.
\end{align}
\end{theorem}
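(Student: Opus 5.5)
This is the classical global error bound for explicit Euler, so the plan is to cast the $\psi$-ODE as $\dot\psi=f(t,\psi)$ and invoke the standard Lipschitz-plus-local-truncation argument of \cite{atkinson2009numerical}. Set
\[
f(t,\psi) := \frac{\dot D(t+\psi)}{1-\dot D(t+\psi)} = \frac{1-\phi'(t+\psi)}{\phi'(t+\psi)} = \frac{1}{\phi'(t+\psi)} - 1,
\]
using $\dot D = 1-\phi'$. Assumption~\ref{assumption:invertibility} gives $\phi'\ge\pi_2^\ast>0$, so $f$ is well defined and $C^1$ whenever $\phi\in C^2$.

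\textbf{Step 1: Lipschitz constant of $f$ in $\psi$.} Differentiating gives $\partial_\psi f(t,\psi) = -\phi''(t+\psi)/(\phi'(t+\psi))^2$. Taking the supremum along the relevant arguments yields the stated $K$. For the final inequality, I would bound $1/(\phi')^2$ through the definitions of $\pi_2^\ast$ and $\pi_3^\ast$ in Assumption~\ref{assumption:invertibility}, noting that the evaluation points $t+\psi(t)=\phi^{-1}(t)$ satisfy $\phi^{-1}(t)\ge\phi^{-1}(0)$.

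\textbf{Step 2: local truncation error.} Since $\phi\in C^2$, $f$ is $C^1$ and hence $\psi\in C^2$. Taylor's theorem then gives
\[
\psi(t_{n+1}) = \psi(t_n) + h f(t_n,\psi(t_n)) + \tau_n, \qquad |\tau_n| \le \tfrac{h^2}{2}\max_{[0,T]}|\ddot\psi|.
\]
The initial condition $\psi(0)=t_0$ is the exact root of $t_0-D(t_0)=0$. This root is unique because $\phi$ is strictly increasing, so $e_0=0$.

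\textbf{Step 3: error recursion.} Let $e_n=\psi(t_n)-\hat\psi(t_n)$. Subtracting the Euler update from the Taylor identity gives
\[
|e_{n+1}| \le (1+hK)|e_n| + \tfrac{h^2}{2}\max|\ddot\psi|.
\]
Unrolling the recursion, summing the geometric series, and using $(1+hK)^n\le e^{nhK}\le e^{KT}$ yields
\[
|e_n| \le \frac{e^{KT}-1}{K}\,\tfrac{h}{2}\max|\ddot\psi|.
\]

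\textbf{Main obstacle.} The delicate point is Step 1. The bound on $\partial_\psi f$ must hold not only along the true trajectory but also at the Euler iterates $t_n+\hat\psi(t_n)$, since the mean-value argument in Step 3 evaluates $f$ between $\psi(t_n)$ and $\hat\psi(t_n)$. I would handle this by taking $K$ as a supremum of $|\phi''|/(\phi')^2$ over a neighborhood of the true trajectory. The uniform lower bound $\phi'\ge\pi_2^\ast$ makes $1/(\phi')^2$ bounded everywhere on that neighborhood. For $h$ small enough, the iterates remain in this neighborhood by an inductive argument using the error bound itself. Up to this standard localization, the stated constant is the one obtained along the exact solution.
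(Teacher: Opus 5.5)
Your Steps 2--3 are the classical Lipschitz-plus-local-truncation argument behind the Euler estimate that the paper simply imports from \cite{atkinson2009numerical}, so the overall route is the paper's. The genuine gap is the final inequality in Step 1.

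By definition $\pi_3^\ast=1/\sup_{\nu\ge\phi^{-1}(0)}\phi'(\nu)$. At the points $\phi^{-1}(t)\ge\phi^{-1}(0)$ it therefore only gives $\phi'\le 1/\pi_3^\ast$, i.e.\ $1/(\phi')^2\ge(\pi_3^\ast)^2$. That is a lower bound on exactly the factor you need to bound from above. The inequality as printed is in fact false in general. Take $D_1=D_2=D$ with $D(t)=1-\tfrac12 e^{-t}$:
\begin{itemize}
\item both assumptions hold, with $\pi_2^\ast=\tfrac12$;
\item $\phi'(s)=1-\tfrac12 e^{-s}\in[\tfrac12,1)$, so $\pi_3^\ast=1$;
\item $t_0=\phi^{-1}(0)\approx 0.768$.
\end{itemize}
With $x=\tfrac12 e^{-s}$ one has $|\phi''|/(\phi')^2=x/(1-x)^2$, which is decreasing in $s$. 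Setting $x_0=\tfrac12 e^{-t_0}\approx 0.23$, this gives $K=x_0/(1-x_0)^2\approx 0.39$. The claimed bound, however, equals $(\pi_3^\ast)^{-2}\sup_{0\le t\le T}|\phi''(t+\psi(t))|=x_0\approx 0.23$.

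The correct estimate, which does follow from your derivative computation, uses the lower bound $\phi'\ge\pi_2^\ast$: $K\le(\pi_2^\ast)^{-2}\sup_{0\le t\le T}|\phi''(t+\psi(t))|$. You should prove and state it in that form. Your remark that $\phi^{-1}(t)\ge\phi^{-1}(0)$ then becomes unnecessary.

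On your localization paragraph, you are right that bounding $\partial_\psi f$ only along the exact trajectory does not control the mean-value points, and the paper's constant glosses over this. Your small-$h$ induction, however, delivers the estimate with a supremum over a tube, not literally the stated $K$.

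A cleaner fix works for every $h$. In the variable $s=t+\psi$ the ODE reads $\dot s=1/\phi'(s)$, and the Euler iterates $s_n=t_n+\hat\psi(t_n)$ satisfy $s_{n+1}=s_n+h/\phi'(s_n)$. So they increase monotonically by at most $h/\pi_2^\ast$ per step. Hence the iterates, the exact trajectory $\phi^{-1}(t)$, and all intermediate points lie in $J=[t_0,\,t_0+T/\pi_2^\ast]$. On $J$, $\sup_J|\phi''|/(\phi')^2\le(\pi_2^\ast)^{-2}\sup_J|\phi''|$ is a genuine Lipschitz constant for your error recursion.
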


Notice that, to guarantee a uniform error bound of size $\epsilon>0$ on $[0,T]$, it suffices to choose the step-size $h$ so that
\begin{align}
    h
    \le
    \frac{2\epsilon K}
    {\left(e^{KT}-1\right)
    \displaystyle\max_{0\le t\le T}\left|\ddot{\psi}(t)\right| }.
\end{align}

To interpret the constants in the Euler error bound, note that the derivatives of $\phi$ and $\psi$ are determined directly by the delay function $D(t)$. Since $\phi(t)=t-D(t)$, we have $\phi''(t)=-\ddot D(t)$. Moreover, recalling that $\psi(t)=\phi^{-1}(t)-t$, one obtains $\ddot{\psi}(t)=\ddot D(t+\psi(t))/(1-\dot D(t+\psi(t)))^3$. Thus the quantities $\phi''$ and $\ddot{\psi}$ appearing in the error bound reflect the curvature of the delay profile $D(t)$ along the prediction horizon. Intuitively, slower varying delays (small $\dot D$ and $\ddot D$) lead to smaller bounds in the Euler approximation error.

The explicit Euler scheme is presented only as a simple illustrative example; higher-order integration methods (e.g., Runge--Kutta schemes) could be employed to improve the local truncation error. However, such methods have two drawbacks. First, for all the stability analysis that follows, we only require $D \in C^1$ and hence, higher-order numerical methods will require stronger regularity in the delay function. Second, the qualitative dependence of the global error bound on the time horizon $T$ is not specific to Euler. For non-contractive dynamics, time-marching schemes typically yield estimates whose constants grow with $T$, as a consequence of Gr\"onwall-type arguments \cite{hairer1993ode1}. 

Thus, although any prescribed tolerance $\epsilon > 0$ can in principle be achieved by selecting $h$ sufficiently small, the required step size generally deteriorates as $T$ increases. In other words, maintaining uniform accuracy over long horizons becomes increasingly expensive. This observation motivates approximating the inverse mapping $\psi$ directly as an operator, rather than computing it via sequential time integration.

Lastly, note that this growth with respect to $T$ does not appear in the predictor approximation schemes of \cite{KarafyllisKrstic2017}. There, the predictor evolves over a fixed interval equal to the delay length, so the approximation error depends only on the known delay magnitude rather than on an arbitrarily long integration horizon. This illustrates a key difference in approximating feedback laws under time-varying delays.

\subsection{Neural operator approximation of the inverse delay time}
Further, we also present a neural operator approach to approximating $\psi(t)$. For completeness, we begin with a brief review of operator learning. Neural operators are neural networks with structured kernel parameterizations that approximate mappings between function spaces, admitting universal operator approximation guarantees (see \cite{Neural_Operator:2023}). Prominent architectures include DeepONet \cite{deeponet}, the Fourier Neural Operator (FNO) \cite{FNO:2021}, and transformer-based variants \cite{Neural_Operator:2023}. We provide the formal definition of a neural operator as

\begin{definition}[Neural Operators] \label{definition:neural-operator} \cite[Section 1.2]{LLS2024} Let $\Omega_u \subset \mathbb{R}^{d_{u_1}}$, $\Omega_v \subset \mathbb{R}^{d_{v_1}}$ be bounded domains with Lipschitz boundary and let  $\mathcal{F}_c \subset C^0(\Omega_u; \mathbb{R}^c)$, $\mathcal{F}_v \subset C^0(\Omega_v; \mathbb{R}^v)$ be continuous function spaces.
Given a channel dimension $d_c > 0$, we call any $\hat{\Psi}$ a neural operator given it satisfies the compositional form $\hat{\Psi} = \mathcal{Q} \circ \mathcal{L}_L \circ \cdots \circ \mathcal{L}_1 \circ \mathcal{R}$ where  $\mathcal{R}$ is a lifting layer, $\mathcal{L}_l, l=1,..., L$ are the hidden layers, and $\mathcal{Q}$ is a projection layer. That is, $\mathcal{R}$ is given by
{
\small
\begin{equation}
    \mathcal{R} : \mathcal{F}_c(\Omega_u; \mathbb{R}^c) \rightarrow \mathcal{F}_s(\Omega_s; \mathbb{R}^{d_c}), \ c(x) \mapsto R(c(x), x), 
\end{equation}
}
where $\Omega_s \subset \mathbb{R}^{d_{s_1}}$, $\mathcal{F}_s(\Omega_s; \mathbb{R}^{d_c})$ is a Banach space for the hidden layers and $R: \mathbb{R}^c \times \Omega_u \rightarrow \mathbb{R}^{d_c}$ is a learnable neural network acting between finite-dimensional Euclidean spaces. For $l=1, ..., L$, each hidden layer is given by
{
\small
\begin{equation} \label{eq:generalNeuralOperator}
    (\mathcal{L}_l v)(x) := s \left( W_l v(x) + b_l + (\mathcal{K}_lv)(x)\right)\,, 
\end{equation}
}
where weights $W_l \in \mathbb{R}^{d_c \times d_c}$ and biases $b_l \in \mathbb{R}^{d_c}$ are learnable parameters, $s: \mathbb{R} \rightarrow \mathbb{R}$ is a smooth, infinitely differentiable activation function that acts component-wise on inputs, and $\mathcal{K}_l$ is the nonlocal operator given by
{
\small
\begin{equation} \label{eq:generalKernel}
    (\mathcal{K}_lv)(x) = \int_\mathcal{X} K_l(x, y) v(y) dy\,,
\end{equation}
}
where $K_l(x, y)$ is a kernel function containing learnable parameters. Lastly, the projection layer $\mathcal{Q}$ is given by
{
\small
\begin{equation}
    \mathcal{Q} : \mathcal{F}_s(\Omega_s; \mathbb{R}^{d_c}) \rightarrow \mathcal{F}_v(\Omega_v; \mathbb{R}^v), s(x) \mapsto Q(s(x), y), 
\end{equation}
}
where $Q$ is a finite dimensional neural network from $\mathbb{R}^{d_c} \times \Omega_v \rightarrow \mathbb{R}^v$. 
\end{definition}

Henceforth, we refer to any neural operator as any neural network satisfying the form of Definition \ref{definition:neural-operator}. Given the continuity of $\Psi$ as in Lemma \ref{lem:inverse_lipschitz_in_D}, we have the following Theorem as a consequence of \cite[Theorem 2.1]{LLS2024}
\begin{theorem}
    \label{thm:neural-op-approximate-theorem}
    Let Assumptions \ref{assumption:delay-time-positive-and-uniformly-bounded} and \ref{assumption:invertibility} hold. Fix a compact set $K \subset C^1(\Omega_u, \mathbb{R}_+)$ where $\Omega_u \subset \mathbb{R}_+$ is bounded. Then, for any $\epsilon > 0$ there exists a neural operator in the form of Definition \ref{definition:neural-operator} such that for all $t \in \Omega_u$
    \begin{align}
        \sup_{D \in K}|\Psi(D)(t) - \hat{\Psi}(D)(t)| \leq \epsilon\,. 
    \end{align}
\end{theorem}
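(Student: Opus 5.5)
The plan is to reduce Theorem \ref{thm:neural-op-approximate-theorem} to the universal approximation result \cite[Theorem 2.1]{LLS2024}. That result says: for any continuous operator $\mathcal{G}:\mathcal{F}_c(\Omega_u;\mathbb{R}^c)\to\mathcal{F}_v(\Omega_v;\mathbb{R}^v)$ and any compact $K\subset\mathcal{F}_c$, and for every $\epsilon>0$, there is a neural operator $\hat{\Psi}$ of the form in Definition \ref{definition:neural-operator} with $\sup_{D\in K}\|\mathcal{G}(D)-\hat{\Psi}(D)\|_{C^0(\Omega_v)}\le\epsilon$. Taking a supremum norm in $t$ then gives the pointwise bound claimed for all $t\in\Omega_u$. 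The work therefore consists of two checks: that $\Psi$, restricted to the relevant domain, is a well-defined map between the required function spaces, and that it is continuous with respect to the topologies that theorem uses.

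\emph{Step 1: well-posedness on $K$.} Every $D\in K$ is assumed to induce $\phi=t-D$ satisfying Assumptions \ref{assumption:delay-time-positive-and-uniformly-bounded} and \ref{assumption:invertibility}. Hence $\phi$ is strictly increasing with $\phi'\ge\pi_2^\ast$, so $\phi^{-1}$ exists and is $C^1$ by the inverse function theorem. It follows that $\Psi(D)=\phi^{-1}(\cdot)-\cdot$ is continuous, and positive because $D>0$.

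A technical point is the domain: $\Psi(D)(t)$ is defined for $t$ in the range of $\phi$, and evaluating it on a bounded $\Omega_u$ requires $\phi$ on the larger window $\{t+\psi(t)\}$. I would fix a common bounded output domain $\Omega_v\subset\bigcap_{D\in K}\phi_D(\mathbb{R}_{>0})$ containing $\Omega_u$. This is the set $\mathcal{R}$ of Lemma \ref{lem:inverse_lipschitz_in_D}, intersected over $K$. It is nonempty with a uniform lower bound thanks to the uniform constants $\pi_0^\ast,\pi_1^\ast$. I also interpret $K$ as a compact set of delays defined on a bounded interval large enough to cover the preimage $\phi^{-1}(\Omega_v)$; this interval is bounded uniformly because $\phi'\ge\pi_2^\ast$. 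In practice I will just state this standing interpretation of the domains.

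\emph{Step 2: continuity.} By Lemma \ref{lem:inverse_lipschitz_in_D}, for $D_1,D_2\in K$,
\begin{align*}
\|\Psi(D_1)-\Psi(D_2)\|_{L^\infty(\Omega_v)}\le\frac{1}{\pi_2^\ast}\|D_1-D_2\|_{L^\infty}\le\frac{1}{\pi_2^\ast}\|D_1-D_2\|_{C^1}.
\end{align*}
So $\Psi:(K,\|\cdot\|_{C^1})\to C^0(\Omega_v)$ is Lipschitz, and in particular continuous. Since $K$ is compact in $C^1$, it is also compact in $C^0$. Moreover, \cite{LLS2024} asks for continuity on all of $\mathcal{F}_c$, not only on $K$. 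If needed, I would extend $\Psi|_K$ to a continuous operator on the whole input space via the Dugundji extension theorem. Alternatively, I would observe that the proof of \cite[Theorem 2.1]{LLS2024} only uses uniform continuity on the compact set. This extension is the step I expect to be the main subtlety: $\Psi$ is only naturally defined on delays satisfying the assumptions, which is not an open or linear subset. The Dugundji extension, from a closed subset of a metric space into the locally convex space $C^0(\Omega_v)$, resolves it cleanly.

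\emph{Step 3: conclude.} Apply \cite[Theorem 2.1]{LLS2024} to the extended operator on $K$ with tolerance $\epsilon$. This yields $\hat{\Psi}$ such that $\sup_{D\in K}\sup_{t\in\Omega_u}|\Psi(D)(t)-\hat{\Psi}(D)(t)|\le\epsilon$, which is the claim.
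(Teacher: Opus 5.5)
Your proposal is correct and follows the paper's route. The paper combines the continuity of $\Psi$ from Lemma~\ref{lem:inverse_lipschitz_in_D} with the universal approximation theorem \cite[Theorem 2.1]{LLS2024}, and it does so in a single sentence. Your extra care does not change that argument; it supplies details the paper leaves implicit: you enlarge the input interval because $\Psi(D)$ on $\Omega_u$ depends on $D$ beyond $\Omega_u$, and you extend $\Psi|_K$ to the whole input space via Dugundji.
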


Although the neural operator framework provides a theoretical $\epsilon$-approximation guarantee, practical challenges remain for large time horizons $T$. Accurately approximating the inverse delay over long intervals generally requires extensive training data and sufficiently expressive network architectures, so in practice the complexity may still scale with $T$ as in time-stepping schemes \cite{lanthaler2025discretizationerrorfourierneural}.

\section{Exponential stability of the closed loop design} \label{sec:stability}
We will use $\hat{\psi}(t) = \hat{\Psi}(D)(t)$ to represent the approximation function obtained by either approach and $\psi(t)$ to be the exact inverse delay time solution.
We now present our main result:
\begin{theorem} \label{thm:main-theorem}
Consider the plant \eqref{eq:dynamics-1}, \eqref{eq:dynamics-2} and suppose that Assumptions \ref{assumption:delay-time-positive-and-uniformly-bounded} and \ref{assumption:invertibility} hold. 
Consider the control law \eqref{eq:control-law-1}--\eqref{eq:control-law-4} implemented using an approximate prediction horizon function $\hat{\psi}(t)$. 
Then, there exists $\epsilon^\ast>0$ such that for any approximation satisfying
\begin{align}
    |\psi(t)-\hat{\psi}(t)| \le \epsilon, \qquad \forall t\ge0, \label{eq:approximation-cond}
\end{align}
with $\epsilon\in(0,\epsilon^\ast)$, there exists a constant $M>0$ and monotonically decreasing function $C(\epsilon)>0$, such that all solutions of the closed-loop system satisfy
\begin{align} \label{eq:main-global}
    \Gamma(t) \le M e^{-C(\epsilon)t}\Gamma(0), \qquad \forall t\ge0,
\end{align}
where
\begin{align}
    \Gamma(t) := |Z(t)|^2 + |Z(t)-\hat{Z}(t)|^2 + \sup_{\phi_1(t)\le \tau \le t} |U(\tau)|^2.
\end{align}
\end{theorem}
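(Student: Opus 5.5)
The plan is to rewrite the closed loop as a cascade of an exponentially stable observer error, a transport PDE for the input, and a nominal ODE that is perturbed by the horizon error. The error $\hat\psi-\psi$ will be treated as a small additive disturbance, and the cascade closed by a small-gain Lyapunov argument.

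\emph{Step 1 (observer and reconstruction error).} Let $\tilde\xi(t)=Z(\phi_2(t))-\xi(t)$. Differentiating $Z(\phi_2(t))$ with the chain rule gives $\frac{d}{dt}Z(\phi_2(t))=\phi_2'(t)[AZ(\phi_2(t))+BU(\phi_1(\phi_2(t)))]$. Subtracting \eqref{eq:control-law-1} yields $\dot{\tilde\xi}=\phi_2'(t)(A-LC)\tilde\xi$. Since $\phi_2'\ge\pi_2^\ast$ by Assumption \ref{assumption:invertibility}, a quadratic Lyapunov function for $A-LC$ gives $|\tilde\xi(t)|\le m_1e^{-\lambda\pi_2^\ast t}|\tilde\xi(0)|$. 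The variation-of-constants formula shows that the exact current state satisfies \eqref{eq:control-law-2} with $\xi$ replaced by $Z(\phi_2(t))$. Hence $Z-\hat Z=e^{A D_2(t)}\tilde\xi$, which is bounded because $D_2\le 1/\pi_1^\ast$. Note that the reconstruction step does not involve $\psi$, so it is unaffected by the approximation.

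\emph{Step 2 (transport PDE and backstepping).} Following \cite{5373886}, I will represent the delayed input by $u(x,t)=U(\phi_1(t+x\psi(t)))$ for $x\in[0,1]$. This satisfies a transport PDE with speed $\pi(x,t)=\frac{1+x\dot\psi}{\psi}$, which lies between positive constants by both assumptions. The exact predictor is $P(t)=Z(t+\psi(t))$. With the implemented horizon $\hat\psi$, the control becomes
\[
U(t)=K\hat P(t)=KP(t)+K\bigl(\hat P_{\hat\psi}(t)-P(t)\bigr)+Ke^{A\hat\psi}(\hat Z-Z).
\]
I will bound $|\hat P_{\hat\psi}-P_\psi|$ by the mean value theorem in the horizon. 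The integrand and the upper limit in \eqref{eq:control-law-3} depend on $\hat\psi$ through $e^{A\cdot}$ and through $U(\phi_1(s))$ on an $\epsilon$-enlarged window, which gives
\[
|\hat P_{\hat\psi}-P_\psi|\le c\,\epsilon\bigl(|Z(t)|+\sup_{\phi_1(t)\le\tau\le t+\epsilon'}|U(\tau)|\bigr).
\]
Next, the backstepping transformation $w(x,t)=u(x,t)-KP(\cdot)$ with $\dot Z=(A+BK)Z+Bw(0,t)$ turns the boundary condition into
\[
w(1,t)=K(\hat P-P)+Ke^{A\hat\psi}e^{AD_2}\tilde\xi.
\]

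\emph{Step 3 (Lyapunov–Krasovskii).} I will use $V=Z^TPZ+b\int_0^1 e^{cx}\frac{w(x,t)^2}{\pi(x,t)}\,dx+a\tilde\xi^TQ\tilde\xi$, with $w(1,t)$ entering as an input. The derivative bound is
\[
\dot V\le -\mu V+\kappa\epsilon^2\Bigl(|Z|^2+\sup|U|^2\Bigr)+\kappa'|\tilde\xi|^2 .
\]
The $\tilde\xi$ term is absorbed by choosing $a$ large. For the sup norm of $U$, I plan to use the sup-norm/$L^\infty$ estimate over the input window from \cite{5373886}, together with the inverse transformation, which bounds $\sup|U|$ by $\|w\|_\infty+|Z|$; the plain $L^2$ norm alone is insufficient here. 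This yields $\Gamma$-type estimates in which the $\epsilon$-terms form a perturbation whose gain is proportional to $\epsilon$. Choosing $\epsilon^\ast$ so that $\kappa\epsilon^{\ast2}<\mu$ gives a decay rate $C(\epsilon)=\mu-\kappa\epsilon^2$, which is decreasing in $\epsilon$. Finally, I will convert between $V$ (or the sup-based functional) and $\Gamma$ using the bounded transformation constants to obtain \eqref{eq:main-global}; setting $\epsilon=0$ recovers Theorem \ref{thm:exact-theorem}.

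\emph{Main obstacle.} The hard part is Step 2's mismatch term. The approximate predictor draws on $U$ over a window $[\phi_1(t),\hat\psi+t]$ that can extend $\epsilon$ past the exact one, which means it uses future input values, and those are not yet in the state. My first fix will be to rewrite $\hat P$ as $P$ evaluated forward by $\hat\psi-\psi$ through the ODE $\dot Z=AZ+BU(\phi_1)$. Since $\phi_1(s)\le t$ whenever $s\le\phi_1^{-1}(t)$, the remaining portion involves only $U$ on a short interval beyond $t$. That piece can be bounded by $K$ times the same predictor, giving an implicit inequality of the form $|\hat P-P|\le c\epsilon(\ldots)+c\epsilon|\hat P-P|$, which is solvable for $\epsilon$ small.
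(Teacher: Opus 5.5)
Your architecture matches the paper's: the observer error $\dot{\tilde\xi}=\phi_2'(A-LC)\tilde\xi$, the identity $Z-\hat Z=e^{AD_2}\tilde\xi$, backstepping with the exact $\psi$, a boundary perturbation $w(1,t)$ split into an observer part and a horizon part, and the observer part absorbed by a large weight. \textbf{The gap is the step you flag as the main obstacle; your fix does not close it.} When $\hat\psi(t)>\psi(t)$, \eqref{eq:control-law-3} needs $U(\tau)$ for $\tau\in(t,\phi_1(t+\hat\psi(t))]$. There $U(\tau)=K\hat P(\tau)$ is the predictor at those \emph{later} times, not at $t$, so what you actually obtain is
\[
|\hat P(t)-P(t)|\le c\epsilon(\cdots)+c\epsilon\sup_{t<\tau\le t+\epsilon'}\bigl(|P(\tau)|+|\hat P(\tau)-P(\tau)|\bigr),
\]
which is not a same-time implicit inequality. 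More fundamentally, the closed loop is then of advanced type. $U(t)$ cannot be computed from data available at time $t$, existence of solutions is unclear, and a differential inequality for $V$ whose right-hand side involves future values cannot be integrated by the comparison lemma. The same defect enters your Step 3, whose $\sup|U|$ runs over $[\phi_1(t),t+\epsilon']$. You must first make the implemented predictor causal. One option is to drop the part of the integrand with $\phi_1(s)>t$; for $\hat\psi>\psi$ this makes the horizon mismatch exactly $(e^{A(\hat\psi-\psi)}-I)P(t)$. Another is to implement a one-sided approximation $\hat\psi-\epsilon\le\psi$. Freezing the integrand at $U(t)$ also works, and that is where your ``solvable for small $\epsilon$'' idea legitimately applies: $U(t)=K\hat P(t)$ then becomes a same-time equation with an $O(\epsilon)$ coefficient.

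The paper never meets this obstacle. In its boundary term it writes the approximate predictor's integral as $\int_0^1\hat\psi e^{A(1-y)\hat\psi}Bu(y,t)\,dy$ over the exact-horizon state $u(y,t)=U(\phi_1(t+y\psi(t)))$, so only inputs on $[\phi_1(t),t]$ enter. The literal substitution $s=t+y\hat\psi$ in \eqref{eq:control-law-3} gives $U(\phi_1(t+y\hat\psi))$ instead, so the issue you raised is real for the law as written. The paper then splits $w(1,t)$ algebraically and bounds it by $|\hat Z-Z|$, $|Z|$ and $\|u(t)\|_2$. It converts $\|u\|_2$ into $\|w\|_2$ and $|Z|$ via $\bar\beta_1,\bar\beta_2$, and closes with the purely $L^2$ functional $Z^TPZ+c\,L(t)+\mu\tilde\xi^TS\tilde\xi$.

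Your need for sup-norms inside the Lyapunov step comes only from the mean value theorem in the horizon, which evaluates $U$ pointwise at the window edge. Splitting the integral directly bounds the causal part of the mismatch by $O(\epsilon)|Z|+O(\sqrt{\epsilon})\|u(t)\|_2$, using Cauchy--Schwarz on the sliver of length $O(\epsilon)$. As written, $\dot V\le-\mu V+\kappa\epsilon^2\sup|U|^2+\cdots$ cannot yield $C(\epsilon)=\mu-\kappa\epsilon^2$, because an $L^2$ functional does not dominate $\sup|U|^2$.

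Your weight $1/\pi(x,t)$ also generates a $\pi_t$ term, and hence $\ddot\psi$, which the $C^1$ assumptions do not provide. The paper's $L(t)=\frac12\int_0^1e^{bx}w^2\,dx$ needs only $\pi_x=\dot\psi/\psi$.

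You are right that the $\sup|U|^2$ part of $\Gamma$ is not bounded by an $L^2$ functional alone, a point the two-sided bound \eqref{eq:Gamma-bnd} passes over. It can be recovered afterwards from $U(\tau)=w(1,\tau)+KP(\tau)$ together with the pointwise bound on $w(1,\tau)$.
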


We briefly discuss the implications of implementing Theorem \ref{thm:main-theorem} under the proposed designs. The approximation condition \eqref{eq:approximation-cond} requires a uniform-in-time bound on the prediction horizon error. However, the approximation guarantees provided by Theorems \ref{thm:euler_phi_inverse_explicit} and \ref{thm:neural-op-approximate-theorem} are inherently finite-horizon as they require compact sets. Nevertheless, a global implementation can be achieved by choosing a compact interval for $\hat{\psi}$ large enough. Further, since the constants 
$M$ and $C$ in Theorem \ref{thm:main-theorem} depend only on the approximation error bound and system parameters, they can be chosen uniformly across intervals. This allows the stability estimate to be propagated over successive compact time intervals given that the interval length is chosen sufficiently large to maintain a contraction:

\begin{corollary}
\label{cor:reinit}
Suppose the hypotheses of Theorem~\ref{thm:main-theorem} hold, and fix
$\epsilon\in(0,\epsilon^*)$, where $\epsilon^*$ is given by
Theorem \ref{thm:main-theorem}. Let $M$ and $C:=C(\epsilon)$ be the
corresponding constants in \eqref{eq:main-global}. Then, if  $H$ is given by 
\begin{align}
    H > \frac{\ln(M)}{C(\epsilon^\ast)}
\end{align}
then, for all $\epsilon < \epsilon^\ast$ there exists an approximation $\hat{\psi}_t$ such that
\begin{align}
    |\psi(\tau) - \hat{\psi}_t(\tau)| \leq \epsilon \,, \quad \tau \in[t, t+H]\,,
\end{align}
and all solutions of the closed loop system satisfy 
\begin{align}
\Gamma(t)\le&\; M e^{-\bar C t}\Gamma(0),\qquad \forall t\ge 0\,, 
\\ \bar{C} :=&\; C(\epsilon) - \ln(M)/H\,. 
\end{align}
\end{corollary}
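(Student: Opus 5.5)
The plan is a restart argument on the window sequence $t_k := kH$, $k=0,1,2,\dots$.

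\emph{Step 1 (approximations on each window).} On each window $[t_k,t_k+H]$ I would build an approximation $\hat\psi_{t_k}$ with $|\psi(\tau)-\hat\psi_{t_k}(\tau)|\le\epsilon$ for $\tau\in[t_k,t_k+H]$. Two routes give this. The first is Theorem~\ref{thm:euler_phi_inverse_explicit}, applied to the shifted delay $D(\cdot+t_k)$ on the compact interval $[0,H]$. Its initial condition $\psi(t_k)$ can be taken exact, or computed to within the tolerance, through the fixed-point relation $\phi(t_k+\psi(t_k))=t_k$, which is solvable by Assumption~\ref{assumption:invertibility}. The second is Theorem~\ref{thm:neural-op-approximate-theorem}, applied on the bounded domain $\Omega_u=[0,H]$ to the compact family of shifted delays. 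Either way, the step size or network depends only on $H$ and $\epsilon$. Outside the window, $\hat\psi_{t_k}$ is extended in any way that keeps the error at most $\epsilon$, for instance by the exact $\psi$ as a placeholder. This is harmless because only values on the window are used while that window is active. The concatenation $\hat\psi(t):=\hat\psi_{t_k}(t)$ for $t\in[t_k,t_{k+1})$ then satisfies \eqref{eq:approximation-cond} for all $t\ge0$.

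\emph{Step 2 (one-window contraction).} Fix a window. By causality, the closed loop on $[t_k,t_{k+1}]$ depends only on the data at $t_k$, namely $Z$, $\xi$ and the input history on $[\phi_1(\phi_2(t_k)),t_k]$, together with $\hat\psi$ restricted to that window. I would therefore restart Theorem~\ref{thm:main-theorem} at initial time $t_k$. This needs the fact that its constants $M$ and $C(\epsilon)$ depend only on $\epsilon$ and on the system and delay constants $\pi_0^\ast,\dots,\pi_3^\ast$, and not on the initial time. That fact is asserted in the paragraph preceding the corollary, and it holds because the Lyapunov analysis is time-invariant in those bounds. The restart gives $\Gamma(t)\le Me^{-C(\epsilon)(t-t_k)}\Gamma(t_k)$ for $t\in[t_k,t_{k+1}]$. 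Since $C$ is monotonically decreasing in $\epsilon$ and $\epsilon<\epsilon^\ast$, we have $C(\epsilon)\ge C(\epsilon^\ast)$, so $Me^{-C(\epsilon)H}<1$ because $H>\ln M/C(\epsilon^\ast)$.

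\emph{Step 3 (iterate).} Applying Step 2 on each window gives $\Gamma(t_k)\le (Me^{-CH})^k\Gamma(0)=e^{-(C-\ln M/H)kH}\Gamma(0)=e^{-\bar C t_k}\Gamma(0)$. For $t\in[t_k,t_{k+1})$, combining this with the within-window estimate yields
\begin{align}
\Gamma(t)\le Me^{-C(t-t_k)}e^{-\bar C t_k}\Gamma(0)\le Me^{-\bar C t}\Gamma(0),
\end{align}
using $C\ge\bar C$. The constant $\bar C$ is positive by the choice of $H$.

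\emph{Main obstacle.} The delicate point is Step 2: justifying that Theorem~\ref{thm:main-theorem} may be restarted at $t_k$ with the same $M$ and $C$ while $\hat\psi$ switches at $t_k$. This requires that the Lyapunov--Krasovskii functional underlying the main theorem is comparable to $\Gamma$ with time-uniform constants, and that switching $\hat\psi$, which makes $\hat P$ and $U$ discontinuous, does not break those estimates. I would first try to argue directly from the proof of Theorem~\ref{thm:main-theorem}: its bounds use only \eqref{eq:approximation-cond} pointwise, so a piecewise-defined $\hat\psi$ that meets the $\epsilon$ bound everywhere satisfies the hypothesis globally. In that case the theorem applies once on $[0,\infty)$ and even gives the stronger rate $C(\epsilon)$, with the restart argument above as a fallback.
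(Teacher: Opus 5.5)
Your proposal is correct and follows the route the paper indicates in the paragraph preceding the corollary. That route re-applies Theorem~\ref{thm:main-theorem} on successive windows of length $H$ with time-uniform constants $M$ and $C(\epsilon)$, and uses $Me^{-C(\epsilon)H}<1$ (from $C(\epsilon)\ge C(\epsilon^\ast)$ and $H>\ln M/C(\epsilon^\ast)$) to chain the window estimates into the rate $\bar C$.

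Your closing remark is also valid and shortens the argument. The proof of Theorem~\ref{thm:main-theorem} uses $|\psi(t)-\hat\psi(t)|\le\epsilon$ only pointwise at the current time, so the concatenated $\hat\psi$ meets its hypothesis on all of $[0,\infty)$ and already gives the faster rate $C(\epsilon)\ge\bar C$. One claim overreaches, though nothing depends on it: the Euler step size or network does not in general depend only on $H$ and $\epsilon$. Theorem~\ref{thm:euler_phi_inverse_explicit} also needs $\phi\in C^2$ and window-dependent bounds on $\ddot\psi$. This is harmless because the corollary only asks for existence of an approximation on each window.
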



\begin{proof} (Of Theorem \ref{thm:main-theorem})
The proof of the result can be broken up into a series of steps. First, we introduce the following backstepping transformation which, due to the $\psi$ approximation, will have a boundary term. Then, we will bound this boundary term by the state estimation error.

 \noindent  \underline{\textbf{Step 1: An estimate on the target system}}
  We begin by rewriting the delay ODE as the following transport PDE-ODE cascade by introducing the distributed input
\begin{align}
    u(x, t) = U(\phi_1(t+x\psi(t)))\,, 
\end{align}
which yields the system
\begin{subequations}
\begin{align}
    \dot{Z}(t) =\;& AZ(t) + Bu(0, t)\,, \\
    u_t(x, t) =\;& \pi(x, t) u_x(x, t)\,, \quad x \in(0, 1)\\
    u(1, t) =\;& U(t)\,,\\
    Y(t) =\;& CZ(\phi_2(t))
\end{align}
\end{subequations}
where the transport speed $\pi(x, t)$ is given by 
\begin{align}
    \pi(x, t) = \frac{1 + x\dot{\psi}(t)}{\psi(t)} 
\end{align}
Note, Assumption \ref{assumption:delay-time-positive-and-uniformly-bounded} guarantees that the denominator is non-zero and hence the transport speed cannot blow up.

Consider the invertible backstepping transform \cite{5373886} governed by
\begin{align}
    w(x, t) = u(x, t) - K e^{Ax\psi(t)} Z(t) - K\int_0^x e^{A(x-y)\psi(t)}Bu(y, t) \psi(t) dy\,. 
\end{align}
This yields the following target system with boundary perturbation
\begin{subequations}
\begin{align}
    \dot{Z}(t) =&\; (A+BK)Z(t) + Bw(0, t)\,, \\ 
    w_t(x, t) =&\; \pi(x, t) w_x(x, t) \,, \quad x \in (0, 1)\,, \\
    w(1, t)     =&\; K[e^{A\hat{\psi}(t)}\hat{Z}(t)-e^{A\psi(t)}Z(t)] \\ &+ K \int_0^1 Bu(y, t)[\hat{\psi}(t)e^{A(1-y)\hat{\psi}(t)} - \psi(t)e^{A(1-y)\psi(t)}] dy 
\end{align}
\end{subequations}

We begin by analyzing the $w$ part of the cascaded system. Consider the Lyapunov function
\begin{align}
    L(t) = \frac{1}{2} \int_0^1 e^{bx} w^2(x, t) dx\,, 
\end{align}
where $b > 0$ is a constant. Computing the time derivative yields
\begin{align}
    \dot{L}(t) =& \int_0^1 e^{bx} w(x, t) w_t(x, t) dx \nonumber \\ 
    =& \int_0^1 e^{bx} w(x, t) \pi(x, t) w_x(x, t)dx \nonumber \\ 
    =& \frac{e^{bx}}{2} \pi(x, t) w^2(x, t) |_0^1 \nonumber \\ 
    &- \frac{1}{2} \int_0^1 (b\pi(x, t) + \pi_x(x, t))e^{bx}w^2(x, t) dx \nonumber \\ 
    =& -\frac{\pi(0, t)}{2} w^2(0, t)    + \frac{\pi(1, t)e^b }{2} w^2(1, t)\nonumber \\ &- \frac{1}{2} \int_0^1 (b\pi(x, t) + \pi_x(x, t))e^{bx}w^2(x, t) dx 
\end{align}
One can show that by choosing $b \geq (1-\pi_3^\ast)\max\left\{1, \frac{1}{\pi_3^\ast}\right\}$, then one obtains
\begin{align}
    b\pi(x, t) + \pi_x(x, t) \geq \pi_1^\ast \beta^\ast \,,
\end{align}
where
\begin{align}
    \beta^\ast = \min\{b-1 + \pi_3^\ast, (b+1)\pi_3^\ast - 1\} > 0\,.
\end{align}
Further, we have 
\begin{align}
    \pi(0, t) =&\; 1/\psi(t) \geq \pi_1^\ast, \\
    \pi(1, t) =&\; \frac{1+\dot{\psi}(t)}{\psi(t)} \leq \frac{1}{\pi_0^\ast\pi_2^\ast}\,,
\end{align}
and hence we obtain
\begin{align}
    \dot{L}(t) \leq -\frac{\pi_1^\ast}{2} w^2(0, t) - \pi_1^\ast \beta^\ast L(t)  + \frac{e^b}{2\pi_0^\ast\pi_2^\ast } w^2(1,t) \,. 
\end{align}

Hence, the only term that creates challenges is the boundary term $w(1, t)$.

\noindent \underline{\textbf{Step 2: A bound on $w(1, t)$.}}
We begin with the first term in the boundary $w(1, t)$. Adding and subtracting $e^{A\psi(t)}\hat{Z}(t)$ and $e^{A\hat{\psi}(t)}Z(t)$ gives
\begin{align}
K\bigl[e^{A\hat{\psi}(t)}\hat{Z}(t)-e^{A\psi(t)}Z(t)\bigr]
=&\; K e^{A\psi(t)}\bigl(\hat{Z}(t)-Z(t)\bigr) \nonumber \\
&+ K\bigl(e^{A\hat{\psi}(t)}-e^{A\psi(t)}\bigr)\bigl(\hat{Z}(t)-Z(t)\bigr) \nonumber \\
&+ K\bigl(e^{A\hat{\psi}(t)}-e^{A\psi(t)}\bigr)Z(t).
\end{align}
Hence,
\begin{align}
\left|K\bigl[e^{A\hat{\psi}(t)}\hat{Z}(t)-e^{A\psi(t)}Z(t)\bigr]\right|
\leq&\; \|K\|\,\|e^{A\psi(t)}\|\,|\hat{Z}(t)-Z(t)| \nonumber \\
&+ \|K\|\,\|e^{A\hat{\psi}(t)}-e^{A\psi(t)}\|\,|\hat{Z}(t)-Z(t)| \nonumber \\
&+ \|K\|\,\|e^{A\hat{\psi}(t)}-e^{A\psi(t)}\|\,|Z(t)|. 
\end{align}
Using Assumption \eqref{assumption:delay-time-positive-and-uniformly-bounded} and the fact $|\hat{\psi}(t) - \psi(t)| \leq  \epsilon \quad \forall t$\,, we have 
\begin{align}
    \|K\|\|e^{A\psi(t)}\| &\leq \|K\|e^{\|A\|/\pi_1^\ast} =: \Omega_1 
\end{align}
and 
\begin{align}
    \|K\| \|e^{A\hat{\psi}(t)} - e^{A\psi(t)}\| =&\; \|K\| \|e^{A\psi(t)}\|\|e^{A[\hat{\psi}(t)-\psi(t)]} -I \|  \nonumber \\ 
    \leq&\;  \Omega_1(e^{\|A\|\epsilon}-1) =: \delta_1(\epsilon)\,. 
\end{align} 
Thus, we have
\begin{align}
    \left|K\bigl[e^{A\hat{\psi}(t)}\hat{Z}(t)-e^{A\psi(t)}Z(t)\bigr] \right| \leq&\; \left[\Omega_1+\delta_1(\epsilon)\right]|\hat{Z}(t) - Z(t)| \nonumber \\ 
    &\; + \delta_1(\epsilon) |Z(t)|\,.  \label{eq:1st-term-estimate}
\end{align}
We now focus on the second term:

\begin{align}
\big|
K \int_0^1 &Bu(y,t)\big[\hat{\psi}(t)e^{A(1-y)\hat{\psi}(t)}-\psi(t)e^{A(1-y)\psi(t)}\big]\,dy
\big|
\nonumber\\
\le{}\;&
\|K\|\|B\|\int_0^1 |u(y,t)|
\Big(
|\hat{\psi}(t)-\psi(t)|\,\|e^{A(1-y)\hat{\psi}(t)}\|
\nonumber \\ &+
|\psi(t)|\,\|e^{A(1-y)\hat{\psi}(t)}-e^{A(1-y)\psi(t)}\|
\Big)\,dy
\nonumber\\
\le{}\;&
\|B\|
\left(
\Omega_1|\hat{\psi}(t)-\psi(t)|
+
|\psi(t)|\delta_1(\epsilon)
\right)\int_0^1 |u(y,t)|\,dy
\nonumber\\
\le{}\;&
\|B\|\big(\Omega_1\epsilon+|\psi(t)|\delta_1(\epsilon)\big)\int_0^1 |u(y,t)|\,dy . \label{eq:2nd-term-estimate}
\end{align}
Now, using the fact $|\psi(t)| \leq \frac{1}{\pi_1^\ast}$, along with estimate in \eqref{eq:1st-term-estimate}, \eqref{eq:2nd-term-estimate} 
\begin{align}
    |w(1, t)| \leq&\; |\Omega_1+\delta_1(\epsilon)||\hat{Z}(t) - Z(t)| + \nonumber  \delta_1(\epsilon)|Z(t)| \\ &\;+ \|B\|\left(\Omega_1\epsilon + \frac{\delta_1(\epsilon)}{\pi_1^\ast} \right) \|u(t)\|_2. 
\end{align}
Using the inequality $(a+b+c)^2 \leq 3a^2+3b^2+3c^2$, we have 
\begin{align}
    w^2(1, t) \leq&\; 3|\Omega_1+\delta_1(\epsilon)|^2|\hat{Z}(t) - Z(t)|^2 + \nonumber  3\delta_1(\epsilon)^2|Z(t)|^2 \\ &\;+ 3\|B\|^2\left(\Omega_1\epsilon + \frac{\delta_1(\epsilon)}{\pi_1^\ast} \right)^2 \|u(t)\|_2^2.
\end{align}
Now following \cite{5373886}, we have that 
\begin{align}
\|w(t)\|_2^2  &\leq \alpha_1(t)\|u(t)\|_2^2  + \alpha_2(t)|Z(t)|^2, \label{eq:w-u-relation} \\
\|u(t)\|_2^2 &\leq \beta_1(t)\|w(t)\|_2^2  + \beta_2(t)|Z(t)|^2, \label{eq:u-w-relation}
\end{align}
where
\begin{align}
\alpha_1(t)  &= 3\left(1 + \int_0^1 \|K e^{A \psi(t) x} B\|^2 \psi(t)^2 dx \right), \\
\alpha_2(t)  &= 3 \int_0^1 \| K e^{A \psi(t) x}\|^2  dx, \\
\beta_1(t)  &= 3\left(1 + \int_0^1 \|K e^{(A+BK) \psi(t)x} B \| ^2 \psi(t)^2 dx \right), \\
\beta_2(t)  &= 3 \int_0^1  \|K e^{(A+BK)\psi(t)x}\|^2 dx\,,
\end{align}
with uniform estimates
\begin{align}
\alpha_1(t)  &\leq \bar{\alpha}_1  = 3\left( 1 + \frac{\|K\|^2 \|B\|^2}{2\pi_1^\ast \|A\|} \left( e^{\frac{2\|A\|}{\pi_1^\ast}} - 1 \right) \right), \\
\alpha_2(t)  &\leq \bar{\alpha}_2  = \frac{3\|K\|^2 \pi_1^\ast}{2\|A\| } \left( e^{\frac{2\|A\|}{\pi_1^\ast}} - 1 \right), \\
\beta_1(t)  &\leq \bar{\beta}_1 = 3\left(1 + \frac{\|K\|^2 \|B\|^2}{2\pi_1^\ast \|A+BK\|}\left( e^{\frac{2\|A+BK\|}{\pi_1^\ast}} - 1 \right)\right), \\
\beta_2(t) &\leq \bar{\beta}_2 = \frac{3\|K\|^2\pi_1^\ast }{2\|A+BK\|}\left( e^{\frac{2\|A+BK\|}{\pi_1^\ast}} - 1 \right).
\end{align}
Hence, we obtain the final perturbation:

\begin{align}
    w^2(1, t) \leq&\; 3|\Omega_1+\delta_1(\epsilon)|^2|\hat{Z}(t) - Z(t)|^2  \nonumber \\ &+ \nonumber  \left[3\delta_1(\epsilon)^2 + 3\bar \beta_2\|B\|^2\left(\Omega_1\epsilon + \frac{\delta_1(\epsilon)}{\pi_1^\ast} \right)^2 \right] |Z(t)|^2 \\ &\;+ 3\bar \beta_1\|B\|^2\left(\Omega_1\epsilon + \frac{\delta_1(\epsilon)}{\pi_1^\ast} \right)^2 \|w(t)\|_2^2.
\end{align}

Notice, the perturbation estimate has error due to both $\hat{\psi}$ and the observer $\hat{Z}(t)$.

\noindent \textbf{\underline{Step 3: A Lyapunov analysis of the full system:}}
Note that, since $A+BK$ and $A-LC$ are Hurwitz, for any $Q, R$ symmetric, positive definite, there exists unique $P, S$ symmetric, positive definite that satisfy the Lyapunov equations
\begin{align}
    P(A+BK)+ (A+BK)^TP = -Q\,, \\
    S(A-LC) + (A-LC)^TS = -R\,. 
\end{align}
Hence, we have that (using Young's inequality)
\begin{align}
    \frac{d}{dt}\left(Z(t)^TPZ(t)\right) =&\; -Z^T(t) QZ(t) + 2Z^T(t)PBw(0, t)\,, \nonumber \\ 
    \leq&\; -\frac{\lambda_{\mathrm{min}}(Q)}{2}|Z(t)|^2 + \frac{2\|PB\|^2}{\lambda_{\mathrm{min}}(Q)}w^2(0, t)\,,
\end{align}
and additionally, letting $\tilde{\xi}(t) = Z(\phi_2(t)) - \xi(t)$, by the observer design, we have
\begin{align}
    \frac{d}{dt}(\tilde{\xi}(t)^T S \tilde{\xi}(t)) \leq&\; -\lambda_{\mathrm{min}}(R) \phi_2'(t)|\tilde{\xi}(t)|^2 \nonumber \\ 
    \leq&\; - \pi_2^\ast\lambda_{\mathrm{min}}(R) |\tilde{\xi}(t)|^2\,. 
\end{align}
Consider the Lyapunov functional:
\begin{align}
    V(t) =&\; Z(t)^T PZ(t) + \frac{4\|PB\|^2}{\pi_1^\ast \lambda_{\text{min}}(Q)}L(t) \nonumber \\ & + \mu \tilde{\xi}(t)^T S \tilde{\xi}(t)\,, 
\end{align}
where $\mu > 0$ is to be given.
Taking the derivative, we obtain
\begin{align}
    \dot{V}(t) \leq&\; -\frac{\lambda_{\mathrm{min}}(Q)}{2}|Z(t)|^2  - \frac{4\|PB\|^2\beta^\ast}{\lambda_{\mathrm{min}}(Q)} L(t)  - \mu\pi_2^\ast\lambda_{\mathrm{min}}(R)|\tilde{\xi}(t)|^2 \nonumber \\ 
    &+ \Omega_2|\Omega_1 + \delta_1(\epsilon)|^2|\hat{Z}(t)-Z(t)| ^2 \nonumber \\ 
    & + \Omega_2 \left( |\delta_1(\epsilon)|^2 +3 \bar{\beta}_2\|B\|^2\left(\Omega_1 \epsilon + \frac{\delta_1(\epsilon)}{\pi_1^\ast}\right) \right) |Z(t)|^2  \nonumber \\ 
    & +  \Omega_2\bar{\beta}_1\|B\|^2\left(\Omega_1\epsilon + \frac{\delta_1(\epsilon)}{\pi_1^\ast} \right)^2 2L(t)\,,  \label{eq:vdot}
\end{align}
where
\begin{align}
    \Omega_2 := \frac{6\|PB\|^2e^{b}}{\pi_0^\ast \pi_1^\ast \pi_2^\ast \lambda_{\mathrm{min}}(Q)}\,. 
\end{align}

If we have that $\epsilon < \epsilon^\ast $ where $\epsilon^\ast$ satisfies the following:
\begin{align}
    \nonumber \max\bigg\{\Omega_2&\Omega_1^2 \left|e^{\|A\|\epsilon^\ast }-1\right|^2,  \Omega_1^2 \left(\epsilon^\ast + \frac{e^{\|A\|\epsilon^\ast}-1}{\pi_1^\ast} \right)^2 \bigg\} \\ &\leq \min\left\{ \frac{\lambda_{\mathrm{min}}(Q)}{12 \Omega_2 \bar{\beta}_2 \|B\|^2}, \frac{2\|PB\|^2 \beta^\ast}{\Omega_2\bar{\beta}_1 \|B\|^2 \lambda_{\mathrm{min}}(Q)}, \frac{\lambda_{\mathrm{min}}(Q)}{4}\right\}\,. 
\end{align}
To give insight into how this condition is derived, we require that the negative terms in \eqref{eq:vdot} dominate the $\epsilon$-dependent perturbation terms. Thus, the above inequality is a sufficient smallness condition ensuring that the approximation error remains below the robustness margin of the closed-loop system. For the engineer, this means that the predictor approximation must be made sufficiently accurate relative to the plant dynamics, controller/observer gains, and delay bounds in order to preserve exponential stability.

\begin{figure*}[ht!]
\centering
    \includegraphics{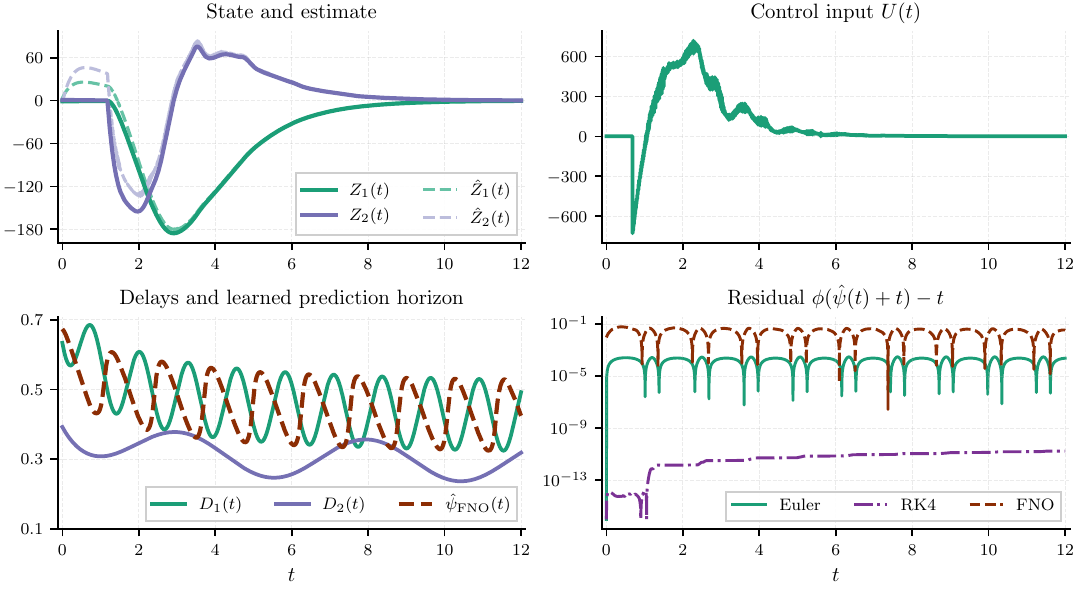}
    \caption{Example of feedback control for the problem in \eqref{eq:numerical-problem-formulation} with the FNO approximation of $\hat{\psi}$. The parameters for the delay function $D_1$ were $(a_1, b_1, \alpha_1, \omega_1, \varphi_1) = (0.4, 0.31, -0.10, 4.95, 0.95)$ and for $D_2$ were $(a_2, b_2, \alpha_2, \omega_2, \varphi_2) = (0.28, 0.15, -0.06, 1.28, 0.82)$. The initial condition of the plant was $Z(0) = [-1, 1]$ and the observer $\xi(0) = [5, -5]$. The input history was fixed to ensure $0$ input for the initial condition and the plant history was set to $Z(0)$ for all times $t \leq 0$.}
    \label{fig:main-figure}
\end{figure*}

Hence, we obtain 
\begin{align}
    \dot{V}(t) \leq&\; -c_1 |Z(t)|^2 - c_2L(t)  \nonumber \\ &\;- \mu\pi_2^\ast\lambda_{\mathrm{min}}(R)|\tilde{\xi}(t)|^2 + \Omega_2|\Omega_1 + \delta_1(\epsilon)|^2|\hat{Z}(t)-Z(t)| ^2
\end{align}
where 
\begin{align}
    c_1 :=&\; \frac{\lambda_{\min}(Q)}{2}
    -\Omega_2 \left( |\delta_1(\epsilon)|^2 +3 \bar{\beta}_2\|B\|^2\left(\Omega_1 \epsilon + \frac{\delta_1(\epsilon)}{\pi_1^\ast}\right)^2 \right)>0, \\
    c_2 :=&\; \frac{4\|PB\|^2\beta^\ast}{\lambda_{\min}(Q)}
    -2\Omega_2\bar{\beta}_1\|B\|^2\left(\Omega_1\epsilon + \frac{\delta_1(\epsilon)}{\pi_1^\ast} \right)^2>0.
\end{align}
To complete, the bound on the Lyapunov function, it requires to bound the observer error. First, note that 
\begin{align}
    |\hat{Z}(t) - Z(t)|^2 =&\; |-e^{A(t-\phi_2(t))}\tilde{\xi}(t)|^2 \nonumber \\ 
    \leq& \; e^{2\|A\|/\pi_1^\ast} |\tilde\xi(t)|^2\,. 
\end{align}
From here, noting that 
\begin{align}
    \Omega_2 \delta_1(\epsilon)^2 \leq \frac{\lambda_{\mathrm{min}}(Q)}{4}\,, 
\end{align}
we can choose $\mu$ as
\begin{align}
    \mu > \frac{\left(2\Omega_2 \Omega_1^2 + \frac{\lambda_{\mathrm{min}}(Q)}{2}\right)e^{2\|A\|/\pi_1^\ast}}{\pi_2^\ast \lambda_{\mathrm{min}}(R)}\,, 
\end{align}
yielding
\begin{align}
    \dot{V}(t) \leq&\; -c_1 |Z(t)|^2 - c_2L(t) -c_3|\tilde{\xi}(t)|^2\,, 
\end{align}
where
\begin{align}
    c_3 := \mu \pi_2^\ast \lambda_{\mathrm{min}}(R) - \Omega_2|\Omega_1+\delta_1(\epsilon)|^2e^{2\|A\|/\pi_1^\ast} \,. 
\end{align}
Notice that $c_1, c_2, c_3$ are all monotonically decreasing in $\epsilon$. Hence, there exists $c_4(\epsilon) > 0$, monotonic, such that 
\begin{align}
    \dot{V}(t) \leq -c_4V(t)\,, \quad \forall t \geq 0
\end{align}
yielding, for all $t \geq 0$
\begin{align}
    V(t) \leq V(0)e^{-c_4t}\,. \label{eq:lyapunov}
\end{align}
From here, using the standard bounds between $\|w\|$ and $\|u\|$ as well as between $\tilde{\xi}$ and $|Z-\hat{Z}|$, we obtain a constant $M > 0$, independent of both time and $\epsilon$, such that
\begin{align}
    \frac{\Gamma(t)}{M} \leq V(t) \leq M\Gamma(t)\,.  \label{eq:Gamma-bnd}
 \end{align}
 From here, using \eqref{eq:lyapunov} and \eqref{eq:Gamma-bnd}, we complete the result. 
\end{proof}
\section{Numerical example} \label{sec:numerics}
\noindent All code for this numerical example is publicly available \cite{bhan2026nap_horizons_code}. 
\textit{Problem formulation:}
To illustrate the approach, we consider the system \eqref{eq:dynamics-1}, \eqref{eq:dynamics-2} with 
\begin{align}
    A = \begin{bmatrix}
        0 & 1  \\ 1 & 2
    \end{bmatrix}\,, \quad B = \begin{bmatrix}
        0 \\ 1 
    \end{bmatrix}\,, \quad C = \begin{bmatrix}
        1 & -1
    \end{bmatrix}\,.  \label{eq:numerical-problem-formulation}  
\end{align}
Notice the open-loop system is unstable and hence we choose $K = [-4, -4]$ and $L=[-4, -8]^T$ as the controller gains to ensure the matrices $A+BK$ and $A-LC$ are Hurwitz. 

\textit{Dataset generation:} In this example, we train a Fourier Neural Operator \cite{FNO:2021} that learns the mapping $D \mapsto \psi$ where $D$ and $\psi$ are given uniformly over $[0, H]$ as discretized function representations. We let $H=12$ in this example. 
The family of delay functions considered is given by
\begin{align}
D(t)
=a
+ \frac{b}{1+t}
+  \alpha \sin \big(\omega t + \varphi\big) \,, \nonumber 
\end{align}
which yields oscillatory delay profiles (See bottom left of Figure \ref{fig:main-figure}). We generate a dataset of 
$2000$ examples by independently sampling the parameters
\begin{align*}
a \sim \mathcal{U}(0.2,\, 3.0)&, \quad 
b \sim \mathcal{U}(0,\, 10), \quad 
\alpha\sim \mathcal{U}(-0.3,\, 0.3), \\
\omega&\sim \mathcal{U}(0.2,\, 3.0), \quad
\varphi \sim \mathcal{U}(0,\, 2\pi),
\end{align*}
where $\mathcal{U}(a, b)$ indicates the uniform distribution over $[a, b]$.
This family of functions are oscillatory and do not have a known analytical inverse, hence they must be approximated. Furthermore, for some choices of the parameters, they do not satisfy the Assumptions (namely $\phi' < 0$). Hence, for each sampled delay function we verify that the assumptions of the theory are satisfied on the interval $[0, H]$ before including the function into our dataset. 

To generate high-fidelity training targets for the neural operator, we compute accurate solutions of the implicit relation
\begin{align}
\phi(\psi(t)+t)-t=0, \nonumber 
\end{align}
which defines $\psi(t)=\phi^{-1}(t)-t$. For each sampled delay function, $\psi(t)$ is computed offline on a fine temporal grid with discretization step $dt=10^{-3}$. At each grid point $t$, the above scalar nonlinear equation is solved to machine precision using a root-finding algorithm, producing highly accurate reference solutions used as supervision for training. 

The dataset of $2k$ pairs is randomly split into 80\% training, 10\% validation, and 10\% test samples. Prior to training, both inputs and outputs are normalized using the empirical dataset mean and standard deviation. We train a one-dimensional Fourier Neural Operator (FNO) \cite{FNO:2021} with $32$ Fourier modes and $64$ hidden channels to learn the mapping $D \mapsto \psi$. The network is trained for $200$ epochs using the Adam optimizer \cite{loshchilov2018decoupled} with learning rate $10^{-3}$, and the training objective is the root-mean-square error between the predicted and reference horizons on the discretized grid. A validation error of $9 \times 10^{-3}$ is achieved during training. Finally, it is important to mention that this computation is performed entirely offline (taking $\approx 3$ hours for dataset generation and training) and therefore does not affect the real-time control implementation. After training, the neural operator provides a direct approximation of the mapping $D \mapsto \psi$. 

\textit{Online deployment:}
During simulation, the trained neural operator is evaluated once on the temporal grid to produce the approximation $\hat{\psi}(t)$, which is then used directly in the predictor-feedback law. This replaces the implicit inversion with a single neural-operator evaluation. For comparison, the Euler-based approach computes the initial condition in \eqref{eq:euler-init-cond-1}, \eqref{eq:euler-init-cond-2} using a root-finding method (bisection method \cite{burden2015numerical}), after which the corresponding ODE is integrated online using the forward Euler method for $t \in [0,T]$. The Runge-Kutta 4 (RK4 \cite{burden2015numerical}) follows the same procedure as Euler, but uses a fourth-order Runge–Kutta (RK4) integrator for time integration.
Figure~\ref{fig:main-figure} shows an example of the closed-loop response using the neural-operator prediction $\hat{\psi}(t)$. The delay functions $D_1$ and $D_2$ lie within the training distribution but were not explicitly seen during training. The observer states converge rapidly and the plant stabilizes. The bottom-right panel reports the consistency error, measuring how well $\hat{\psi}(t)$ satisfies the implicit relation $\phi(\hat{\psi}(t)+t)-t=0$. 

Lastly, to provide a comparison of computational efficiency, we evaluate the wall-clock time required to compute the solution over $1000$ independent samples. On average, the Euler method requires 
$11.98$ ms per evaluation, while RK4 incurs a roughly 
$3.5\times$ higher cost at 
$42.50$ ms. Moreover, the learned FNO surrogate is the fastest due to the GPU usage (Nvidia $4070$), requiring only 
$2.01$ ms per evaluation, corresponding to a 
$6.0\times$ speedup over Euler and
$21.1\times$ over RK4, but comes at a sacrifice of accuracy as shown in the bottom right of Figure \ref{fig:main-figure}. Hence, the frequency of the control input and robustness of the linear system will determine which approximation approach for $\hat{\psi}$ is best.

\section{Conclusion}

In this work, we studied predictor-feedback implementation for linear systems with time-varying input and measurement delays when the inverse delay-time mapping is not available in closed form. We developed an output-feedback design consisting of an observer, a predictor, and the nominal delay-free control law. To address the fact that the predictor horizon depends on the inverse delay-time mapping, we proposed two practical approximation methods: an ODE-based numerical inversion scheme and a neural-operator approximation framework. For both approaches, approximation guarantees were established over compact sets. We then derived sufficient conditions under which exponential closed-loop stability is preserved despite approximation error. Lastly, we presented a numerical example demonstrating the tradeoff between approximation accuracy and computational cost while highlighting the efficacy of the approach. 
\bibliography{references}
\bibliographystyle{elsarticle-num.bst}

\appendix
\section{Proof of Lemma \ref{lem:inverse_lipschitz_in_D}}
\begin{proof}
Fix any $y\in \mathcal{R}$ and define $\psi_i(y) \coloneq \Psi(D_i)(y), \qquad i \in \{1, 2\}$ so that
\begin{align}
    \phi_i^{-1}(y) = y + \psi_i(y)\,. 
\end{align}
Hence, applying $\phi$ to both sides, we have that 
\begin{align}
    y = (y+\psi_i(y)) - D_i(y+\psi_i(y))\,. 
\end{align}

Subtracting yields
\begin{align}
    \psi_1(y)-\psi_2(y) = D_1(y+\psi_1(y)) - D_2(y+\psi_2(y))\,. 
\end{align}
Adding and subtracting $D_2(y+\psi_1(y))$, we have
\begin{align}
    |\psi_1(y) - \psi_2(y)| \leq& |D_1(y+\psi_1(y))-D_2(y+\psi_1(y))| \\ &+ |D_2(y+\psi_1(y))-D_2(y+\psi_2(y))|
\end{align}
Since $\phi_2'(t)=1-D_2'(t)\ge\pi_2^\ast>0$, we have
$D_2'(t)\le 1-\pi_2^\ast$.  
Applying the mean value theorem yields
\begin{align}
|D_2(y+\psi_1(y))&-D_2(y+\psi_2(y))|
\nonumber \\ &\leq
(1-\pi_2^\ast)|\psi_1(y)-\psi_2(y)|.
\end{align}

Hence, combining, we obtain
\begin{align}
    |\psi_1(y)-\psi_2(y)|
\le
\|D_1-D_2\|_{L^\infty}
+
(1-\pi_2^\ast)|\psi_1(y)-\psi_2(y)|\,.
\end{align}
Rearranging yields the direct result. 

\end{proof}

\end{document}